\title{Resilient Level Ancestor, Bottleneck, and Lowest Common Ancestor Queries in Dynamic Trees}
\titlerunning{Resilient Level Ancestors, Bottleneck, and LCA Queries in Dynamic Trees}
\newcommand{\dept}{Dept.\kern 0.2883em}
\author{Luciano~Gualà}{Department of Enterprise Engineering, University of Rome ``Tor Vergata'', Italy}{guala@mat.uniroma2.it}{https://orcid.org/0000-0001-6976-5579}{This work was partially supported by the project E89C20000620005 ``ALgorithmic aspects of BLOckchain TECHnology'' (ALBLOTECH).}
\author{Stefano Leucci}{\dept of Information Engineering, Computer Science, and Mathematics, University of L'Aquila, Italy\and \url{https://www.stefanoleucci.com}}{stefano.leucci@univaq.it}{https://orcid.org/0000-0002-8848-7006}{}
\author{Isabella Ziccardi}{\dept of Information Engineering, Computer Science, and Mathematics, University of L'Aquila, Italy}{isabella.ziccardi@graduate.univaq.it}{https://orcid.org/0000-0002-1550-3677}{}
\authorrunning{L. Gualà, S. Leucci, and I. Ziccardi} 
\keywords{level ancestor queries, lowest common ancestor queries, bottleneck vertex queries, resilient data structures, faulty-RAM model, dynamic trees} 
\newcommand{\LA}{\ensuremath{\texttt{LA}}\xspace}
\newcommand{\RMQ}{\ensuremath{\texttt{RMQ}}\xspace}
\newcommand{\LCA}{\ensuremath{\texttt{LCA}}\xspace}
\newcommand{\BVQ}{\ensuremath{\texttt{BVQ}}\xspace}
\newcommand{\AddLeaf}{\ensuremath{\texttt{AddLeaf}}\xspace}
\newcommand{\NewTree}{\ensuremath{\texttt{NewTree}}\xspace}
\newcommand{\parent}{\ensuremath{\mathrm{parent}}\xspace}
\newcommand{\depth}{\ensuremath{\mathrm{depth}}\xspace}
\newcommand{\mynull}{\ensuremath{\mathtt{null}}\xspace}
\newcommand{\climb}{\texttt{climb}}
\newcommand{\nodeflag}{\ensuremath{\mathrm{flag}}\xspace}
\newcommand{\nearblack}{\ensuremath{\mathrm{near\_black}}\xspace}
\newcommand{\A}{\mathcal{A}}
\newcommand{\pseudoT}{\mathcal{T}}
\begin{document}
\maketitle

\begin{abstract}
We study the problem of designing a \emph{resilient} data structure maintaining a tree under the Faulty-RAM model [Finocchi and Italiano, STOC'04] in which up to $\delta$ memory words can be corrupted by an adversary. 
Our data structure stores a rooted dynamic tree that can be updated via the addition of new leaves, requires linear size, and supports \emph{resilient} (weighted) level ancestor queries, lowest common ancestor queries, and bottleneck vertex queries in $O(\delta)$ worst-case time per operation.
\end{abstract}

\section{Introduction}

Due to a diverse spectrum of reasons, ranging from manufacturing defects to charge collection, the data stored in modern memories can sometimes face corruptions, a problem that is exacerbated by the recent growth in the amount of stored data. To make matters worse, even a single memory corruption can cause classical algorithms and data structures to fail catastrophically.
One mitigation approach relies on low-level error-correcting schemes that transparently detect and correct such errors.
These schemes however either require expensive hardware or employ space-consuming replication strategies.
Another approach, which has recently received considerable attention \cite{FI04,FGI07,BFFGIJMM07,JMM07,Italiano10,PetrilloGI13,LLM19}, aims to design \emph{resilient} algorithms and data structures that are able to remain operational even in the presence of memory faults, at least with respect to the set of uncorrupted values.

In this paper we consider the problem of designing resilient data structures that store a \emph{dynamic} rooted tree $T$ while answering several types of queries. 
More formally, we focus on maintaining a tree that initially consists of a single vertex (the root of the tree) and can be dynamically augmented via the $\AddLeaf(v)$ operation that appends a new leaf as a child of an existing vertex $v$.\footnote{In the literature this setting is also called \emph{incremental} or \emph{semi-dynamic} to emphasize that arbitrary insertions and deletions of tree vertices/edges are not supported.
In this paper, unless otherwise specified, we follow the terminology of \cite{D91} by considering \emph{dynamic} trees that only support insertion of leaves.}
It is possible to \emph{query} $T$ in order to obtain information about its current topology. 
We mainly concerned on the following well-known query types:
\begin{description}
\item[(Weighted) Level Ancestor Queries:] Given a vertex $v$ and an integer $k$, the query $\LA(v,k)$ returns the \emph{$k$-parent of $v$}, i.e., the vertex at distance $k$ from $v$ among the ancestors of $v$.
In the weighted version of the problem each vertex of the tree $T$ is associated with a small (polylogarithmic) positive integer weight, and a query needs to report the closest ancestor $u$ of $v$ such that the total weight of the path from $v$ to $u$ in $T$ is at least $k$.
\item[Lowest Common Ancestor Queries:] Given two vertices $u$, $v$, the query $\LCA(u,v)$ returns the vertex at maximum depth in $T$ that is simultaneously and ancestor of both $u$ and $v$.
\item[Bottleneck Queries:] In this problem, each vertex has an associated  integer weight and, given two vertices $u$, $v$, a $\BVQ(u,v)$ query reports the minimum/maximum-weight vertex in the path between $u$ and $v$ in $T$.\footnote{It is easy to see that this also captures the well-known \emph{bottleneck edge query} variant \cite{DemaineLW14}, in which weights are placed on edges instead of vertices.}
It is worth noticing that, when $T$ is a path, the above problem can be seen as a dynamic version of the classical \emph{range minimum query} problem.
In the range minimum query problem, a query $\RMQ(i,j)$ reports the minimum element between the $i$-th and the $j$-th element of a (static) input sequence \cite{BenderF00}.
\end{description}

For all of the above problems, \emph{linear-size} \emph{non-resilient} data structures supporting both the $\AddLeaf$ and the query operations in constant worst-case time are known \cite{AH00,CH05}.
It is then natural to investigate what can be achieved for the above problem when the sought data structures are required to withstand memory faults.

To precisely capture the behaviour of resilient algorithms, one needs to employ a model of computation that takes into account potential memory corruptions.
To this aim, we adopt the \emph{Faulty-RAM} model introduced by Finocchi and Italiano in \cite{FI04}. This model is similar to the classical RAM model except that all but $O(1)$ memory words can be subject to corruptions that alter theirs contents to an arbitrary value. The overall number of corruptions is upper bounded by a parameter $\delta$ and such corruptions are chosen in a \emph{worst-case} fashion by a computationally unbounded \emph{adversary}.

A simple error-correcting strategy based on replication provides a general scheme for obtaining resilient versions of any classical \emph{non-resilient} data structure at a cost of a $\Theta(\delta)$ blowup in both the time needed for each operation and the size of the data structure.
This space overhead is undesirable, especially when $\delta$ can be large.
For the above reason, the main goal in the area is obtaining compact solutions with a particular focus on linear-size data structures \cite{FGI09,FI04,BFFGIJMM07,FGI07,Italiano10,PetrilloGI13,FinocchiGI09,JMM07}. 
However, for linear-size data structures, even $\delta = \omega(1)$ corruptions can be already sufficient for the adversary to irreversibly corrupt some of the stored elements \cite{FGI09}.
The solution adopted in the literature is that of suitable relaxing the notion of correctness by only requiring queries to answer correctly with respect to the portions of the data structure that are uncorrupted.
Notice that this is not easy to obtain since corruptions in unrelated parts of the data structure can still misguide the execution of a query (see \cite{FGI09} for a discussion).\footnote{For example, the authors of \cite{FGI09} consider the problem of designing linear-size resilient dictionaries adopt a notion of \emph{resilient search} that requires the search procedure to answer correctly w.r.t. all uncorrupted keys (see Section~\ref{sec:related} for a more precise definition). Notice how the classical solutions based on search trees do not meet this requirement since a single unrelated corruption can destroy the tree path leading to the sought key.}

\subsection{Our results}

\begin{figure}
    \centering
    \includegraphics[width=0.8\textwidth]{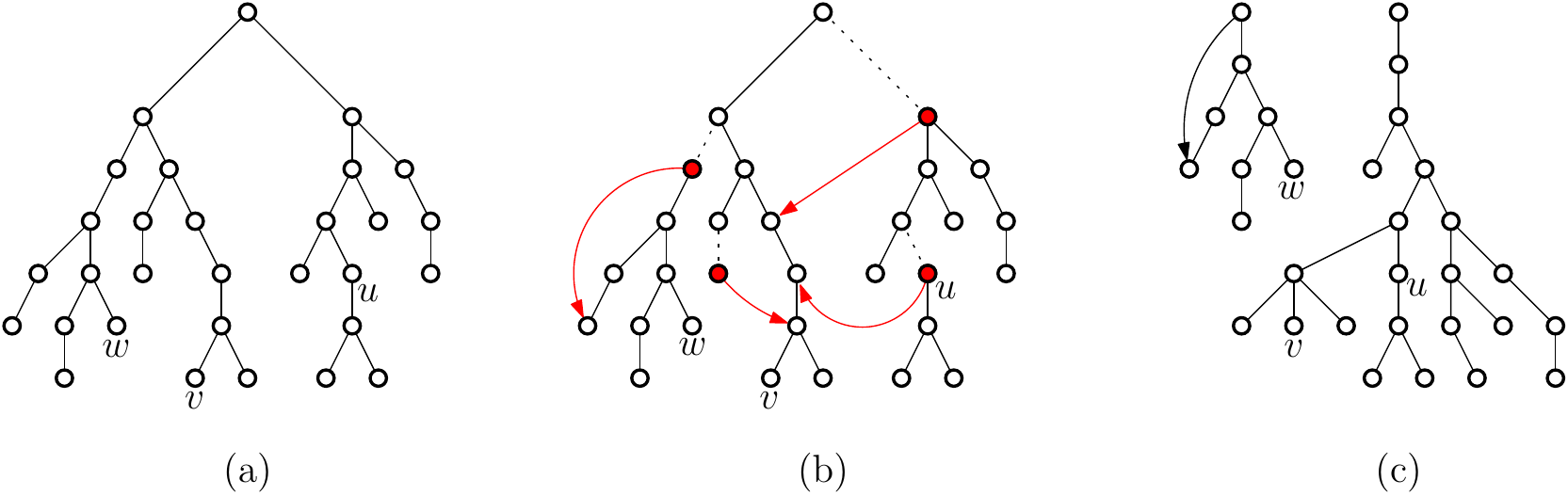}
    \caption{Illustration of resilient LA queries. The current tree $T$ logically maintained by the data structure is depicted in (a). In this example, each vertex maintains a reference to its parent in $T$.
    In (b) some of the parent-child relationships have been altered by the adversary by corrupting the nodes highlighted in red.
    Since the algorithm cannot distinguish corrupted memory words from uncorrupted ones, its (defective) view of $T$ is shown in (c).
    Nevertheless, a resilient data structure must still be able to correctly answer queries involving uncorrupted paths.
    For exampple, the query $\LA(v,k)$ is required to answer correctly for all (meaningful) values of $k$ since the path from $v$ to the root is uncorrupted, while query $\LA(w,k)$ is required to answer correctly for $k \leq 2$.
    Since $u$ is corrupted, the query $\LA(u,k)$ is allowed to answer incorrectly for every value of $k$.}
    \label{fig:resilientLA}
\end{figure}

We design a data structure maintaining a dynamic tree that can be updated via the addition of new leaves, and supports \emph{resilient} (weighted) $\LA$, $\LCA$, and $\BVQ$ queries.

Our data structure stores each vertex of the current tree $T$ in a single memory word of $\Theta(\log n)$ bits. We will say that a vertex $v$ is corrupted if the memory word associated with $v$ has been modified by the adversary. 
A resilient query is required to correctly report the answer when no vertex in the tree path between the two vertices explicitly or implicitly defined by the query is corrupted. For example, a
$\LA(v,k)$ query correctly reports the $k$-parent $u$ of $v$ whenever every vertex in the unique path from $u$ to $v$ in $T$ is uncorrupted.

We deem our notion of resilient query to be quite natural since in any reasonable representation of $T$ the adversary can locally corrupt the parent-children relationship and hence change the observed topology of $T$. See Figure \ref{fig:resilientLA} for an example.

Our data structure occupies linear (w.r.t.\ the current number of nodes) space, and supports the $\AddLeaf$ operation and $\LA$, $\BVQ$, and $\LCA$ queries in $O(\delta)$ worst-case time. For weighted $\LA$ queries,  the above bound on the query time holds as long as $\delta = O(\mathrm{polylog} n)$.

We point out that our solution is obtained through a general vertex-coloring scheme which is, in turn, used to ``shrink'' $T$ down to a compact tree $Q$ of size $O(n/\delta)$ that can be made resilient via replication.
Each edge of $Q$ represents a path of length $\delta$ between two consecutive colored nodes in $T$.
If no corruption occurs, this coloring scheme is regular and will color all vertices having a depth that is a multiple of $\delta$. 
While it is possible for corruptions to \emph{locally} destroy the above pattern, our coloring is able to automatically recover as soon as we move away from the corrupted portions of the tree. 
We feel that such a scheme can be of independent interest as an useful tool to design other resilient data structures involving dynamic trees.

We leave the problem of understanding whether, similarly to other resilient data structures \cite{FGI09,JMM07}, one can prove a lower bound of $\Omega(\delta)$ on the time needed to perform $\AddLeaf$ operation and/or to answer our queries.

\subsection{Related work}
\label{sec:related}

\subparagraph{Non-resilient data structures.} Before discussing the known result in our faulty memory model, we fist give an overview of the closest related results in the fault-free case.
Since the landscape of data structures that answer queries on dynamic trees is vast and diverse, we will focus only on the best-known data-structures capable of answering $\LA$, $\BVQ$, or $\LCA$ queries. 

As far as $\LA$ queries are concerned, the
problem has been first formalized in \cite{BO94} and in \cite{D91}.
Both papers consider the case in which the tree $T$ is \emph{static} and 
show how to built, in linear-time, a data structure that requires linear space and that answers queries in constant worst-case time (albeit the hidden constant in \cite{BO94} is quite large).
A simple and elegant construction achieving the same (optimal) asymptotic bounds is given in \cite{BF04}.
In~\cite{D91}, the \emph{dynamic} version of the problem was also considered: the authors provide a data structure supporting both $\LA$ queries and the $\AddLeaf$ operation in constant \emph{amortized} time.
The best known \emph{dynamic} data structure is the one of \cite{AH00}, which implements the above operations in constant \emph{worst-case} time.
This data structure also supports constant-time $\BVQ$ queries and constant-time weighted $\LA$ queries when the vertex weights are polylogarithmically bounded integers.
Moreover, the solution of  \cite{AH00} also provides amortized bounds for the problem of maintaining a forest of $n$ nodes under \emph{link} operations (i.e., edge additions that connect two vertices in different trees of the forest) and $\LA$ queries. In this case, a sequence of $m$ operations requires $O(m \alpha(m,n))$ time, where $\alpha$ is the inverse Ackermann function. 

Regarding $\BVQ$ queries with integer weights, in addition to the solution discussed above (which supports leaf additions and queries in constant-time), \cite{BDR11} shows how to also support leaf deletions using $O(1)$ amortized time per update and constant worst-case query time. 

The problem of answering $\LCA$ queries is a fundamental problem which has been introduced in \cite{AHU76}. In \cite{HT84}, Harel and Tarjan show how to preprocess in linear time any \emph{static} tree in order to to build a linear-space data structure that is able to answer $\LCA$ queries in $O(1)$ time.
The case of \emph{dynamic} trees is also well-understood: it is possible to simultaneously support (i) insertions of leaves and internal nodes, (ii) deletion of leaves and internal nodes with a single child, and (iii) $\LA$ queries, in constant worst-case time per operation \cite{CH05}.

\subparagraph{Resilient data structures.} As already mentioned, the Faulty-RAM model has been introduced in \cite{FI04} and used in the context of resilient data structures 
in \cite{FGI07} where the authors focused on designing
resilient dictionaries, i.e., dynamic sets that support insertions, deletions, and lookup of elements.
Here the lookup operation is only required to answer correctly if either (i) the searched key $k$ is in the dictionary and is uncorrupted, or (ii) $k$ is not in the dictionary and no corrupted key equals $k$.
The best-known (linear-size) resilient dictionary is provided in \cite{BFFGIJMM07} and supports each operation in the optimal $O(\log n + \delta)$ worst-case time, where $n$ is the number of stored elements. 
The Faulty-RAM model has also been adopted in \cite{JMM07}, where the authors design a (linear-size) resilient priority queue, i.e., a priority queue supporting two operations: \emph{insert} (which adds a new element in the queue) and \emph{deletemin}. Here \emph{deletemin} deletes and returns either the minimum uncorrupted value or one of the corrupted values. Each operation requires $O(\log n + \delta)$ amortized time, while $\Omega(\log n + \delta)$ time is needed to answer an \emph{insert} followed by a \emph{deletemin}.

The Faulty-RAM model has also been adopted in the context of designing resilient algorithms. We refer the reader to \cite{Italiano10} for a survey on this topic. 

A resilient dictionary for a variant of the Faulty-RAM model in which the set of corruptible memory words is random (but still unknown to the algorithm) has been designed in \cite{LLM19}.

In a broader sense, problems that involve non-reliable computation have received considerable attention in the literature, especially in the context of sorting and searching. See for example \cite{Pelc02,Geissmann0LPP19,Geissmann0LP19,FeigeRPU94,Cicalese13,ChenGMS17,10.1007/978-3-030-79987-8_19}. 

\subsection{Structure of the paper}

The paper is organized as follows. Section~\ref{sc:preliminaries} introduces the used notation and formally defines the Faulty-RAM model. It also briefly describes the error-correcting replication strategy mentioned in the introduction. For technical convenience, in Section~\ref{sc:LA_static} and \ref{sc:LA_dynamic} we describe our data structure for \LA queries only. This allows us to introduce all the ideas behind the more general coloring scheme discussed above. 
As a warm up, we first consider the simpler case in which the tree $T$ is \emph{static} and is already known at construction time (Section~\ref{sc:LA_static}), and we then tackle the dynamic version of the problem (Section~\ref{sc:LA_dynamic}) for which we give our main result.
In Section~\ref{sec:handling_queries} we show how to modify our data structure to handle the other types of queries.

\section{Preliminaries}
\label{sc:preliminaries}

\subparagraph*{Notation.}
Let $T$ be a rooted tree. For each node $v \in T$, we denote with $\parent(v)$ the parent of $v$. If $\pi$ is a path, we denote by $|\pi|$ its length, i.e., the number of its edges. Given any two nodes $u,v$, we denote by $d_T(u,v)$ the length of the (unique) path between $u$ and $v$ in $T$.
Moreover, if $\pi$ traverses $u$ and $v$, we denote by $\pi[u:v]$ the subpath of $\pi$ between $u$ and $v$, endpoints included. We will use round brackets instead of square brackets to denote that the corresponding endpoint is excluded (so that, e.g., $\pi(u:v]$ denotes the subpath of $\pi$ between $u$ and $v$ where $u$ is excluded and $v$ is included).

\subparagraph*{Faulty memory model.}
We now formally describe the \emph{Faulty-RAM} model introduced by Finocchi and Italiano in \cite{FI04}. In this model the memory is divided in two regions: a \emph{safe region} with $O(1)$ memory locations, whose locations are known to the algorithm designer, and the (unreliable) \emph{main memory}.
An adaptive adversary can perform up to $\delta$ corruptions, where a corruption consists in instantly modifying the content of a word from the main memory. The adversary knows the algorithm and the current contents of the memory, has an unbounded computational power, and can simultaneously perform one or more corruptions at any point in time.
The safe region cannot be corrupted by the adversary and
there is no error-detection mechanism that allows the algorithm to distinguish the corrupted memory locations from the uncorrupted ones. 

Without assuming the existence of $O(1)$ words of safe memory, no reliable computation is possible: in particular, the  safe memory can store the code of the algorithm itself, which otherwise could be corrupted by the adversary. 

As observed in \cite{FGI09} (and already mentioned in the introduction), there is a simple strategy that allows any non-fault tolerant data structure on the RAM model to also work on the Faulty-RAM model, albeit with multiplicative $\Theta(\delta)$ blow-up in its time and space complexities.
Essentially, such a solution implements a trivial error-correcting mechanism by
simulating each memory word $w$ in the RAM model with a set $W$ of $2\delta+1$ memory words in the Faulty-RAM model: writing a value $x$ to $w$ means writing $x$ to all words in $W$, and reading $w$ means computing the majority value of the words in $W$ (which can be done in $O(\delta)$ time, and $O(1)$ space using the safe memory region and the Boyer-Moore majority vote algorithm \cite{Moore91}). We refer to such technique as the \emph{replication strategy}.

\section{Warming Up: \LA queries in Static Trees}\label{sc:LA_static}

In order to introduce our ideas, in this section we will show how to build a simplified version of our resilient data structure when the tree $T$ cannot be dynamically modified.
Our simplified data structure requires linear space and answers level-ancestor queries in $O(\delta)$ time.
As opposed to our \emph{dynamic} data structure, in this special case the tree $T$ must be known in advance and hence we need to initialize our data structure from an input tree $T$. For simplicity, we assume that no corruptions occur while our data structure is being built. Notice that this assumption can be removed by carefully using the replication strategy described in Section~\ref{sc:preliminaries}.

\subparagraph{Description of the Data Structure.}
Let $T$ be a rooted tree with $n$ nodes. To define the data structure for $T$, we need to divide the nodes of $T$ into two sets: the \emph{black} nodes and the \emph{white} nodes. We define the set of black nodes to ensure that its cardinality is $O(n/\delta)$: a node $v$ in $T$ is \emph{black} if we simultaneously have that (i) its depth in $T$ is a multiple of $\delta$, and (ii) the subtree of $T$ rooted in $v$ has height at least $\delta-1$. A node $v$ in $T$ is \emph{white} if it is not black.
We notice that for each black $v$ node in $T$ there are at least $\delta$ distinct nodes (i.e., all the vertices in the path from $v$ to any vertex having depth $\delta-1$ in the subtree of $T$ rooted at $v$), thus implying that the total number of black nodes in $T$ is at most $n/\delta$.

If we define a relation of parenthood for the black nodes of $T$, we can define a new \emph{black tree} $Q$ in which each vertex $\overline{v}$ is associated with black vertex $v$ of $T$. 
The parent of $\overline{v}$ in $Q$ is the vertex $\overline{u}$ corresponding to the lowest black proper ancestor $u$ of $v$ in $T$. See Figure~\ref{fig:TandQ_staticquery} for an example.

Our data structure stores the (colored) tree $T$, as described in the following, along with an additional data structure $D_Q$ that is able to answer \LA queries on $Q$.
The tree $T$ is stored as an array of records, where each record is associated with a vertex of $T$, occupies $\Theta(\log n)$ bits, and is stored in a single memory word.
The memory word associated with a node $v$ stores:
\begin{itemize}
\item a pointer $p_v$ to $\parent(v)$, if any. If $v$ is the root of $T$ then $p_v = \mynull$;
\item a pointer $q_v$ to the corresponding node $\overline{v}$ in $Q$, if any. If no such node exists, i.e. if $v$ is white, then $q_v = \mynull$.
\end{itemize}

\begin{algorithm}[t]
\caption{Answers a level ancestor query $\LA(v,k)$ in the special case of static trees.}
\label{alg:static-query}
\small

\lIf{$k \leq 2 \delta$}
{
	\Return $\texttt{climb}(v,k)$\label{ln:climb-littlek}
}

\BlankLine

Climb up the tree $T$ from $v$ for $2 \delta$ nodes searching a black node\; \label{ln:findblacknode}
\lIf{the previous procedure did not find a black node}
{
	\Return error%
}

\BlankLine

$v' \gets$ a black node found in the previous procedure\;
$d \gets$ distance between $v'$ and $v$; \, $k' \gets k -d$\;
$u' \gets \LA_Q(q_{v'},\lfloor k'/\delta \rfloor)$\label{ln:query_Q}\; 
$k_{\text{rest}} \gets k' -\lfloor k'/\delta\rfloor\cdot \delta$\;
\Return $\texttt{climb}(u',k_{\text{rest}})$\;\label{ln:climb-krest}
\end{algorithm}

Moreover we maintain, for each vertex $\overline{v}$ of $Q$, a pointer to the corresponding vertex $v$ of $T$ as satellite data. The data structure $D_Q$ is the resilient version of any (non-resilient) data structure that is capable of answering $\LA$ queries on static trees in constant time and requires linear space (see, e.g., the data structure in \cite{BF04}). 

As we observed before, any data structure can be made resilient with a multiplicative $\Theta(\delta)$ blow-up in its time and space complexities.
In our case, since the number of vertices in $Q$ is $O(n/\delta)$, the 
final space required to store $D_Q$ is $O(n)$ and the query time becomes $O(\delta)$.
Notice that, in spite of the (at most $\delta$) memory corruptions performed by the adversary, the data structure $D_Q$ always returns the correct answer to all possible $\LA$ queries on  $Q$.
We will denote by $\LA_Q(\overline{v},k)$ the level-ancestor query on $Q$, which returns the vertex of $T$ corresponding to the $k$-parent of $v$ in $Q$.

\subparagraph{The resilient level-ancestor query.}

In this section we show  how to implement our resilient \LA query.
We start by defining a routine that will be useful in the sequel: if $v$ is a node of $T$ and $i$ is a non-negative integer, we denote by $\climb(v,i)$ a procedure that returns the vertex reached by a walk on $T$ that starts from $v$ and iteratively moves to the parent of the current vertex $i$ times. When the procedure encounters a vertex $u$ with pointer $p_u=\mynull$ that has to be followed, $\texttt{climb}(v,i)$ reports that the root has been reached.  
Notice that $\texttt{climb}(v,i)$ requires $O(i)$ time and, whenever no corrupted vertices are encountered during the walk, it correctly returns the $i$-parent of $v$. 
Although the $\texttt{climb}(\cdot, \cdot)$ procedure could immediately be used to answer an $\LA$ query, doing so require $\Omega(n)$ time in the worst case. To improve the query time we use the data structure $D_Q$ described above and we distinguish between \emph{short} and \emph{long} $\LA(v,k)$ queries depending on the value of $k$.

Short queries, i.e., queries $\LA(v,k)$ with $k \le 2\delta$, are handled
by simply invoking $\texttt{climb}(v,k)$ and, from the above observation, it follows that this is a resilient query.
For longer queries the idea is that of locating a nearby black ancestor of $v$, performing an $\LA_Q$ query on $Q$ to quickly reach a black descendant $u'$ of the $k$-parent $w$ of $v$ such that $d(u', w) \le \delta$, and finally using the \climb \ procedure once more to reach $w$ from $u'$. See Algorithm~\ref{alg:static-query}.

During the execution of our resilient query algorithm we always ensure that all followed pointers are valid. Since we are dealing with a static tree $T$, we can handle invalid pointers simply by halting the whole query procedure and reporting an error. A slightly more sophisticated handling of invalid pointers will be used to tackle the dynamic case. An example $\LA$ query is given in Figure~\ref{fig:TandQ_staticquery}.

The correctness of the above algorithm immediately follows from the fact that, when no vertex between $v$ and the $k$-th ancestor $v$ is corrupted, $v$ must have a black ancestor at distance at most $2 \delta$ and from the fact that the replication strategy ensures that all queries on $Q$ are always answered correctly.\footnote{Here the distance of $2\delta$ is essentially tight as it can be seen, e.g., by considering a tree $T$ consisting of a path of length $2\delta-2$ rooted in one of its endpoints. The only black vertex of $T$ is the root. Notice how the vertex $u$ at depth $\delta$ is white since the subtree of $T$ rooted in $u$ has height $\delta-2$.}

\begin{figure}
    \centering
    \includegraphics[scale=.75]{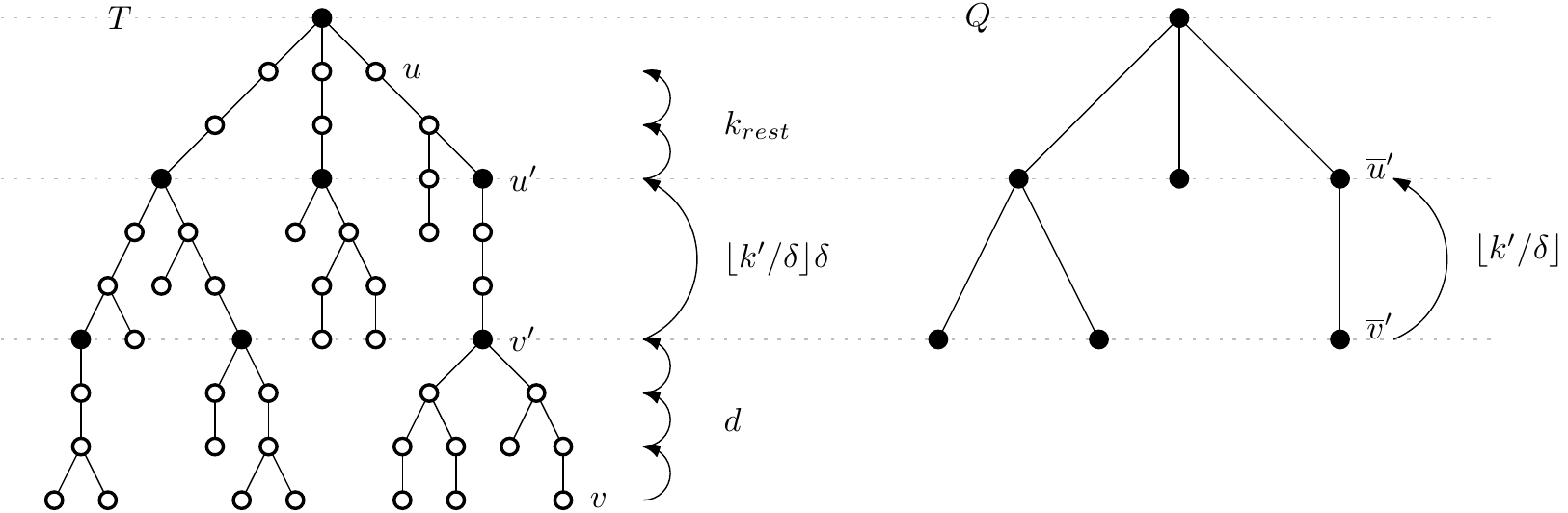}
    \caption{Left: A static tree $T$ that has been colored according to the scheme in Section~\ref{sc:LA_static} for $\delta=3$. Right: the corresponding black tree $Q$.
    We also show the path climbed while answering the query $\LA(v,k)$ with $k=8$. In this case $d=3$, $\lfloor k'/\delta\rfloor=1$ and $k_{\text{rest}}=2$.
    Notice how $Q$ is used to quickly reach $u'$ from $v'$.}
    \label{fig:TandQ_staticquery}
\end{figure}

To show that Algorithm~\ref{alg:static-query} answers an $\LA$ query in $O(\delta)$ time, we notice that the $\climb$ operations in lines \ref{ln:climb-littlek} and \ref{ln:climb-krest} require time $O(\delta)$, and so does line~\ref{ln:findblacknode}. Moreover, the query to $D_Q$ (line~\ref{ln:query_Q}) can also be performed in $O(\delta)$ time as discussed above.

\section{\LA queries in dynamic trees}\label{sc:LA_dynamic}

In this section we provide our main result for $\LA$ queries. In Section~\ref{sec:handling_queries}, we show how our ideas can be extended to also handle weighted $\LA$, $\BVQ$, and $\LCA$ queries.

\subsection{Description of the Data Structure}
\label{ssec:description_dynamic_datastructure}

Some of the key ideas behind our data structure for $\LA$ queries in dynamic trees are extensions of the ones used for static case. Namely, the nodes of $T$ are colored with either \emph{black} or \emph{white}, the set of \emph{black} nodes will have size $O(n/\delta)$, and it will correspond to the vertex set of an auxiliary \emph{black forest} $Q$. Ideally, in absence of corruptions, $Q$ is exactly the black tree as defined in the static case, namely the tree in which the parent of each (black) node $\overline{v}$ in $Q$ is the vertex $\overline{u}$ associated with the lowest black proper ancestor $u$ of $v$ in $T$. 

Moreover, we would still like the vertices of $T$ having a depth that is a multiple of $\delta$ to be colored black, similarly to the static case.
However, we can no longer afford to maintain such a rigid coloring scheme since the tree is now being dynamically constructed via successive $\AddLeaf$ operations, and the adversary's corruptions might cause vertices to become miscolored.
We will however ensure that such a regular coloring pattern will be followed by the portions of $T$ that are sufficiently distant from the adversary's corruptions.
This will allow us to answer $\LA$ queries using a strategy similar to the one employed for the static case.

Our data structure stores the following information. The record of a node $v$ maintains, in addition to the pointer $p_v$ to its parent and to the pointer $q_v$ to the corresponding node $\overline{v}$ in $Q$ (if any), an additional field $\nodeflag_v$. Intuitively, $\nodeflag_v$ can be thought of as a Boolean value in $\{\bot, \top\}$. The initial value of a flag is $\bot$ and we say that the flag is \emph{unspent}. Spending a flag means setting it to $\top$.
We will spend these flags to ``pay'' for the creation of new black nodes. Spent flags will also signal the presence of a nearby black ancestor.

For technical reasons, we also allow an unspent flag $\nodeflag_v$ to be additionally \emph{annotated} with a pair $(x,i)$ where $x$ is (the name of) a node and $i$ is an integer.
In practice this amounts to setting $\nodeflag_v$ to $(x,i)$, which is logically interpreted as $\bot$. Such an annotated flag is still unspent.
This provides an additional safeguard against corruptions that may occur during the execution of our leaf insertion algorithm (see Section~\ref{sec:addleaf}).

The node records are stored into a dynamic array $\A$, whose current size $n$ is kept in the safe region of memory. This array supports both elements insertions and random accesses in constant worst-case time.\footnote{The standard textbook technique which
handles insertions into already full array by moving the current elements into a new array of double capacity already achieves $O(1)$ amortized time per insertion. With some additional technical care, the above bound also holds in the worst case.
The idea is to distribute the work needed to move elements into the new array over the insertions operations that would cause the current array to become full (it suffices to move 2 elements per insertion). Using this scheme, at any point in time, each element is stored into a single memory word.}

The pointer $p_v$ is then the index (in $\{0, \dots, n-1\}$) of the record corresponding to the parent of $v$ in $\A$.
Initially, $\A$ only contains the root $r$ of $T$ at index $0$. Moreover, we will always store new leaves at the end of $\A$ so that, in absence of corruptions, the index of a vertex $v$ in $\A$ is always smaller than the index of any of its descendants.
As a consequence, whenever we observe the index stored in pointer $p_v$ is greater than or equal to the index of $v$ itself, we know that $v$ must have been corrupted by the adversary.
We find convenient to use the above fact to simplify the handling of corrupted vertices: whenever we encounter an invalid pointer $p_v \ge v$ we treat it as being equal to $0$, i.e., an invalid pointer $p_v$ always refers to the root $r$ of $T$. This rules also applies to any read pointer, including those accessed by the $\texttt{climb}(\cdot, \cdot)$ procedure already defined in Section~\ref{sc:LA_static}.

Then the (possibly corrupted) contents of $\A$, at any point in time, induce a \emph{noisy tree} $\pseudoT$ whose root is $r$, and the parent of each vertex $v \neq r$ is the vertex pointed by $p_v$ according to the above rule.
Clearly, if no corruptions occur $T$ and $\pseudoT$ coincide.

Moreover, we store a resilient data structure $D_Q$ that, in addition to the already-defined $\LA_Q(\overline{v}, k)$ query, also supports the following additional operations in $O(\delta)$ time.
\begin{description}
\item[$\NewTree_Q(v)$:] Given a vertex $v$ of $T$, it creates a new tree in the forest $Q$ consisting of a single vertex $\overline{v}$ associated to $v$, and it returns a pointer to $\overline{v}$.
\item[$\AddLeaf_Q(\overline{u}, v)$:] Given a vertex $\overline{u}$ of $Q$, and a vertex $v$ of $T$, it creates a new vertex $\overline{v}$ associated to $v$ as a children of $\overline{u}$ in $Q$. Finally, it returns a pointer to the newly added vertex $\overline{v}$.
\end{description}

This data structure $D_Q$ is the resilient version, obtained using the replication strategy, of the linear-size data structure that supports both the $\AddLeaf$ operation and $\LA$ queries in constant time \cite{AH00}.
Notice that $D_Q$ always returns the correct answer to all possible $\LA$ queries on $Q$. Moreover, once we ensure that the number of vertices that become black (and hence the size of $Q$) is always $O(n/\delta)$, we have that  the (resilient) data structure $D_Q$ requires $O(n)$ space (this will be shown formally in the proof Theorem~\ref{thm:main_LA}).

\subsection{The \texttt{AddLeaf} operation}
\label{sec:addleaf}

Before describing our implementation of the $\AddLeaf$ operation, it is useful to give some additional definitions.
We say that $v$ is \emph{near-a-black} in a tree $\Tilde{T}$ if there exists some $k \in \{1, 2, \dots, \delta\}$ such that the $k$-parent of $v$ in $\Tilde{T}$ is black.
Moreover, we say that $v$ is \emph{black-free} in $\Tilde{T}$ if no $k$-parent of $v$ in $\Tilde{T}$ for $k \in \{1, 2, \dots, 2\delta-1\}$ is black.

The procedure $\AddLeaf(x_{par})$ takes a vertex $x_{par}$ of $T$ as input and adds a new child $x$ of $x_{par}$ to $T$ (see Algorithm~\ref{alg:add-leaf}).
The record corresponding to new vertex $x$ is appended at the end of the dynamic array $\A$. For simplicity we will assume that, during the execution of $\AddLeaf(x_{par})$, the record of vertex $x$ is never corrupted by the adversary. This can be guaranteed without loss of generality since a (temporary) record for $x$ can be kept in safe memory and copied back to $\A$ (which is stored in the unreliable main memory) at the end of the procedure.

Our algorithm consists of a first  \emph{discovery} phase and possibly of a second additional \emph{execution} phase.
The aim of the discovery phase is that of exploring the current tree by climbing up to $3\delta-1$ levels of $\pseudoT$ from $x$ while gathering information for the second phase.
In order to do so, Algorithm~\ref{alg:add-leaf} climbs $\delta$ levels of $\pseudoT$ from the newly inserted node $x$, reaching a vertex $y$, and checks during the process that all the flags associated with the traversed nodes are unspent. 
If any of these flags is spent, we immediately return from the $\AddLeaf(x_{par})$ procedure without performing the execution phase. Otherwise, the algorithm climbs $2\delta-1$ further levels from $y$ to determine whether $y$ appears to be black-free or near-a-black.
In the latter case, it keeps track of the distance $\ell$ from $y$ to the closest black proper ancestor $y'$ of $y$ that is encountered.  
If $y$ is neither black-free nor near-a-black, we return from the $\AddLeaf(x_{par})$ procedure (without performing the execution phase),
otherwise we move on to the execution phase.
A technical detail of the discovery phase is the following: while climbing from $x_{par}$ to $y$, the generic $i$-th unspent flag is annotated with $(x, i)$ (possibly overwriting any existing previous annotation) and will be checked by the execution phase. Recall that these flags remain unspent.

The execution phase once again climbs $\delta$ levels of $\pseudoT$ staring from $x$, with the goal of changing the color of an existing white vertex to black (hence creating a corresponding black node in $Q$). 
This is guaranteed to happen unless the annotations of the unspent flags set during the discovery phase reveal that one such vertex has been corrupted in the meantime.  
The creation of a new black vertex in $Q$ is ``paid for'' by \emph{spending} these $\delta$ unspent flags (i.e., setting them to $\top$).
The position of the new black vertex depends on whether $y$ was near-a-black or black-free.
In the former case the vertex $y'$ discovered in the first phase will be the $\delta$-parent of the new black vertex $x'$, and a new leaf $\overline{x}'$ is appended to $\overline{y}'$ in $Q$.
In the latter case, $y$ will become black and a new tree containing a single vertex $\overline{y}$ is added to $Q$.
Notice that, if a vertex $b$ is colored black during the $\AddLeaf$ operation, the execution phase always spends $\nodeflag_b$.

\begin{algorithm}[t]
\SetKw{Break}{break}
\SetKw{And}{and}

\caption{$\AddLeaf(x_{par})$}
\label{alg:add-leaf}
\small
Add a new record $x$ at the end of $\A$; \,
$p_x \gets x_{par}$; \, $\nodeflag_x \gets \bot$; \, $q_x \gets \mynull$\; 

\BlankLine
\tcp{Discovery Phase}
$y \gets x$\tcp*{Check the flags of the lowest $\delta$ proper ancestors of $x$}
\label{ln:phase1-flag1}
\For{$i=1,\dots, \delta$}{
    \lIf{$y=r$}{\Return $x$}
    $y \gets p_y$\;
    \lIf(\tcp*[f]{Return immediately if a spent $\nodeflag$ is found}){$\nodeflag_y = \top$}{\Return $x$}
    $\nodeflag_y \gets (x,i)$\tcp*{Annotate $\nodeflag_y$}
}\label{ln:phase1-flag2}

\BlankLine
\tcp{Check whether $y$ is near-a-black.} 
\tcp{\mbox{$\ell$ will be the distance to the closest proper  black ancestor $y'$ of $y$, if any}}
$y' \gets y$; \quad $\ell \gets 0$; \quad $\nearblack \gets \texttt{false}$\;\label{ln:superinsertion}
\While{$\ell < \delta$ \And  $y' \neq r$ \And near\_black$=\texttt{false}$}
{
    $y' \gets p_{y'}$; \quad $\ell \gets \ell+1$\;
    \lIf{$y'$ is black}{near\_black $\gets \texttt{true}$}
}

\tcp{If $y$ is not near-a-black, check whether it is black-free}\label{ln:discovery-near-a-black}
\If{$\nearblack= \texttt{false}$}
{
    $z \gets y'$\;
    \For{$\delta-1$ times}
    {
        \lIf{$z=r$}{\Break}
        $z \gets p_z$\;
        \lIf{$z$ is black}{\Return $x$}
    }   
}

\BlankLine
\tcp{Execution Phase. (Node $y$ was either near-a-black or black-free)}
\tcp{Acquire the flags of the lowest $\delta$ proper ancestors of $x$}
$z \gets x$\;
\label{ln:phase2-flag1}
\For{$i=1, \dots, \delta$}
{
    \lIf{$z=r$}{\Return $x$}
    $z \gets p_z$\;
    \lIf(\tcp*[f]{Check the annotation of $\nodeflag_z$}){$\nodeflag_z \neq (x,i)$}{\Return $x$}
    \lIf(\tcp*[f]{$y'$ is the $\delta$-parent of $x'$}){$\nearblack=\texttt{true}$ \And $i=\ell$}{$x' \gets z$}
    \label{ln:execution-reachingx'}
    $\nodeflag_z \gets \top$\tcp*{Spend $\nodeflag_z$}
}\label{ln:phase2-flag2}

\BlankLine
\If{$\nearblack=\texttt{true}$}
{
    $q_{x'} \gets \AddLeaf_Q(q_{y'},x')$\;\label{ln:add-leafQ-nearablack}
}
\Else
{
    $q_y \gets \NewTree_Q(y)$\;
}
\Return $x$\;
\end{algorithm}

\subsection{Analysis of the data structure}

In this section we analyze our data structure. The core of the analysis is to show that the $\AddLeaf$ operation in Algorithm \ref{alg:add-leaf} guarantees that in $\pseudoT$, if we are sufficiently distant from all the corrupted vertices, the black nodes are regularly distributed. The formal property is stated in Lemma \ref{lem:blacknodes_pattern}. We first need to prove auxiliary properties.  
In Figure \ref{fig:example_corruptions} we give an example that shows that, even in an uncorrupted path, if we are not sufficiently distant from corruptions, the black nodes can form irregular patterns in the path.

\begin{figure}[t]
    \centering
    \includegraphics[width=\textwidth]{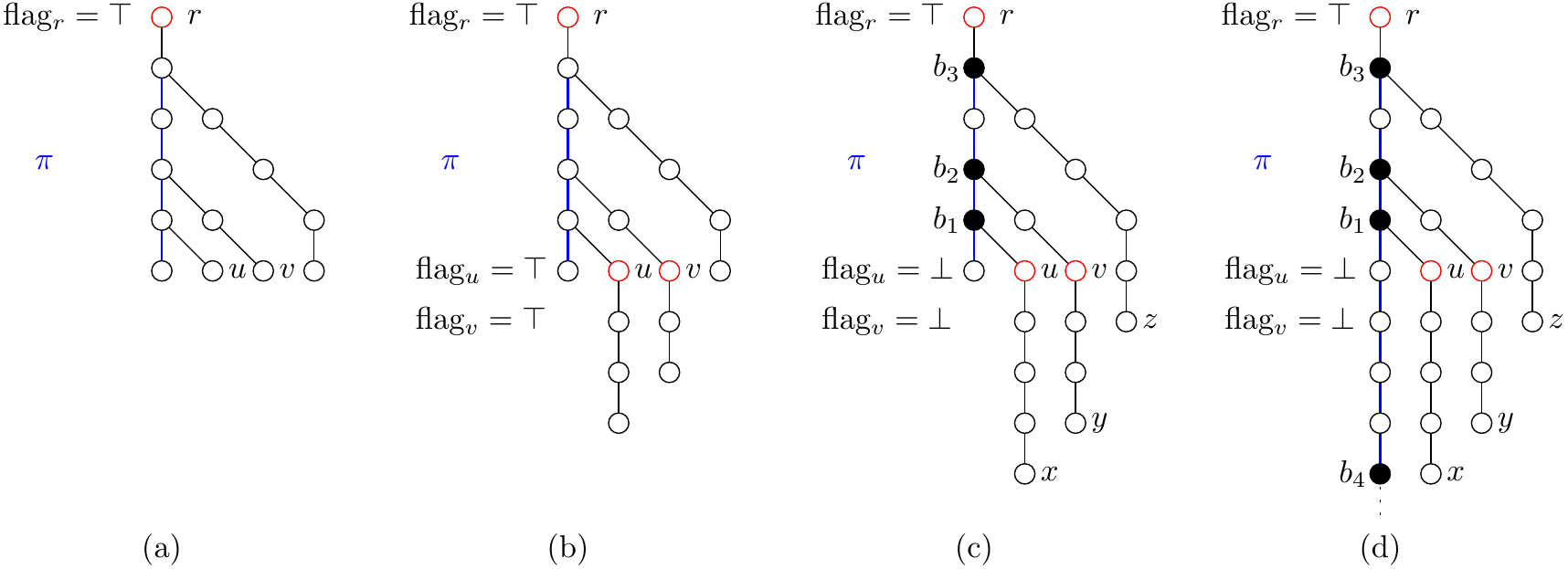}
    \caption{An example showing that an uncorrupted path $\pi$ (depicted in blue) can exhibit an irregular pattern of black vertices (d). 
    Situation (a) can be reached when the adversary corrupts $r$ by setting $\nodeflag_r = \top$ before the insertions of the other nodes take place.
    To obtain (b), the adversary can set $\nodeflag_u = \nodeflag_v = \top$, thus corrupting $u$ and $v$ before $u$ and $v$'s descendants are inserted.
    If the adversary sets $\nodeflag_u$ and $\nodeflag_v$ back to $\bot$ before $x$, $y$, and $z$ are inserted (in this order), we arrive in configuration (c) in which $b_1$, $b_2$, and $b_3$ have been colored black.
    Inserting the remaining vertices yields (d).}
    \label{fig:example_corruptions}
\end{figure}

The following lemma shows that if the flag of a vertex $w$ appears to be spent, then either there must be a nearby black ancestor of $w$, unless a nearby corruption occurred. See Figure~\ref{fig:main-merge}~(a).

\begin{lemma}
Let $w$ and $z$ be two nodes such that $z$ is the $\delta$-parent of $w$ in $T$ and such that no node in the path $\pi$ from $z$ to $w$ in $T$ has been corrupted. 
If $\nodeflag_w = \top$, then there exists a black node in $\pi(z:w]$.
\label{lem:preliminary}
\end{lemma}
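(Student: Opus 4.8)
The plan is to trace the spent flag $\nodeflag_w = \top$ back to the \AddLeaf\ operation that produced it, and then exploit the hypothesis that $\pi$ is uncorrupted to locate the promised black vertex. Since the algorithm sets a flag to $\top$ only in the execution phase of \AddLeaf\ (when spending it), and since the adversary never touches the uncorrupted word of $w$, the value $\top$ currently stored at $w$ must have been written by the algorithm itself. Hence there is a well-defined last operation $O$, inserting a fresh leaf $x$, whose execution phase performed $\nodeflag_w \gets \top$; at that moment $w$ is the $i$-parent of $x$ in $\pseudoT$ for some $i \in \{1,\dots,\delta\}$. Writing $a_0 = w, a_1, \dots, a_\delta = z$ for the vertices of $\pi$, I will exhibit a black vertex among $a_0, \dots, a_{\delta-1}$, which is exactly $\pi(z:w]$.

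First I would reconcile the algorithm's internal view with $T$. Because records are only appended to $\A$, the name $x$ is unique, so the annotation $(x,i)$ that the execution check reads on $w$ can only have been written by $O$'s own discovery phase; thus $w$ is the $i$-parent of $x$ at discovery time as well. The key point is that the words of $a_0, \dots, a_\delta$ are never modified, so their parent pointers permanently coincide with $T$: once either climb of $O$ reaches $w$ it must continue along $a_1, a_2, \dots$. In particular the algorithm's variable $y$ equals $a_{\delta-i}$, the discovery phase annotates each $a_{j-i}$ with $(x,j)$, and because these annotations survive on uncorrupted vertices the execution phase passes every check on $a_0, \dots, a_{\delta-i}$. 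Since spending $\nodeflag_w$ already witnesses that the execution reached iteration $i$, all of iterations $1,\dots,\delta$ succeed, so the execution does not abort and reaches the final coloring step.

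It then remains to split on the outcome recorded by the discovery phase, which must be near-a-black or black-free (otherwise $O$ would have had no execution phase). In the black-free case $O$ colors $y = a_{\delta-i}$ black, and $\delta-i \in \{0,\dots,\delta-1\}$. In the near-a-black case with $\ell \ge i$ the execution reaches iteration $\ell \le \delta$ and colors $x' = a_{\ell-i}$ black, with $\ell-i \in \{0,\dots,\delta-1\}$. In the near-a-black case with $\ell < i$ I instead use the pre-existing black ancestor $y' = a_{\delta-i+\ell}$ that witnessed near-a-blackness, for which $\delta-i+\ell \in \{1,\dots,\delta-1\}$. In each case the located vertex is some $a_j$ with $0 \le j \le \delta-1$, hence lies in $\pi(z:w]$; being on $\pi$ it is uncorrupted, and since the algorithm never un-colors a vertex it is still black, which proves the claim.

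I expect the reconciliation in the second step to be the main obstacle: the quantities $y$, $\ell$, $y'$, and $x'$ are computed from climbs in the noisy tree $\pseudoT$ that may pass through corrupted regions, and one must argue that they nonetheless coincide with genuine ancestors of $w$ in $T$ and that the execution phase truly reaches the coloring step. The two levers that make this work are the uniqueness of the fresh name $x$ — which pins down $O$ and forces $w$ to be the $i$-parent of $x$ both at discovery and at execution time — and the uncorrupted-path hypothesis, which freezes the parent pointers of $a_0,\dots,a_\delta$ so that every climb and every annotation restricted to this window behaves exactly as it would in $T$.
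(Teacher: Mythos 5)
Your proof is correct and follows essentially the same route as the paper's: both trace $\nodeflag_w=\top$ back to the inserting operation for $x$, use the annotation $(x,i)$ on the uncorrupted $w$ to identify the discovery-phase and execution-phase climbs above $w$ with the true path $\pi$, conclude that the execution loop runs to completion, and then split into the black-free case and the near-a-black cases $\ell\ge i$ and $\ell<i$. Your index bookkeeping ($a_{\delta-i}$, $a_{\ell-i}$, $a_{\delta-i+\ell}$) is if anything slightly more explicit than the paper's, but the argument is the same.
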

\begin{proof}
Let $x$ be the node whose insertion in $T$ caused $\nodeflag_w$ to be set to $\top$. Moreover, let $P$ be the path of length $\delta$ from $x$ to $y$ traversed in the discovery phase of Algorithm~\ref{alg:add-leaf} in lines \ref{ln:phase1-flag1}--\ref{ln:phase1-flag2}.
Similarly, let $P'$ be the path from $x$ to $y$ traversed in the execution phase of Algorithm~\ref{alg:add-leaf} in lines \ref{ln:phase2-flag1}--\ref{ln:phase2-flag2}.

Clearly, $P'$ contains $w$. Moreover, if $w$ is the $i$-th node traversed in $P'$, then $\nodeflag_w = (x, i)$ in the execution phase and (since $w$ is uncorrupted), $\nodeflag_w$ was set to $(x, i)$ in the discovery phase.
As a consequence, $w$ is also the $i$-th node in $P$ and $P[y:w] = P'[y:w]$.
Hence, $y$ is at distance at most $\delta-1$ from $w$ in $P$ (and in $T$) showing that $z$ is a proper ancestor of $y$.
Therefore all nodes in $P'[y:w]$ are uncorrupted, and the loop in  in lines \ref{ln:phase2-flag1}--\ref{ln:phase2-flag2} of Algorithm~\ref{alg:add-leaf} is executed to completion.
This ensures that the execution phase will color a node $b$ black. We distinguish two cases depending on whether $y$ was observed to be near-a-black or black-free in the discovery phase.

If $y$ is black-free, then $b$ is exactly $y$ and the claim follows.
Otherwise, $y$ is near-a-black and the discovery phase computed the distance $\ell$ between $x$ and its closest black proper ancestor.
If $\ell \ge i$, then Algorithm~\ref{alg:add-leaf} colors a vertex in $P(z:w]=\pi(z:w]$ black. Otherwise, if $\ell < i$, the discovery phase observed that the $\ell$-parent $y'$ of $y$ was black. Since $\ell < \delta$, $y'$ lies in $\pi(z:y]$.
\end{proof}

\noindent Next lemma shows that an uncorrupted path of length at least $3\delta$ must contain a black vertex.

\begin{lemma}
Let $x$ and $z$ be nodes in $T$ such that $z$ is the $3\delta$-parent of $x$ in $T$ and such that no node in the path $\pi$ from $x$ to $z$ in $T$ has been corrupted.
Then, there exists a black node $w$ in $\pi[z:x)$. 
\label{lem:invariant1}
\end{lemma}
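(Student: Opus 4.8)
The plan is to argue by contradiction, exploiting the fact that for an \emph{uncorrupted} node both the properties ``being black'' and ``having a spent flag'' are monotone over time: in a legitimate execution a node is only ever recolored white$\to$black, and a flag is only ever set $\bot \to (x,i) \to \top$, never reverted. Since a node on $\pi$ is uncorrupted, its memory always reflects these legitimate writes; hence if such a node is currently white it has been white throughout its entire history, and if its flag was spent at some past moment it is still spent now. I write the path as $x = a_0, a_1, \dots, a_{3\delta} = z$, where $a_j$ denotes the $j$-parent of $x$, and I assume for contradiction that none of $a_1, \dots, a_{3\delta}$ is black; by the monotonicity above, each of them has always been white.

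The key step is to replay the $\AddLeaf$ operation that originally inserted $x$ as a leaf. Because every $a_j$ is an ancestor of $x$ and all parent pointers along $\pi$ are uncorrupted, the nodes $a_1, \dots, a_{3\delta}$ already existed and were correctly reachable along $\pi$ at that moment, and since $z$ is the $3\delta$-parent of $x$ the climbs performed by Algorithm~\ref{alg:add-leaf} never reach the root. I then split on what the discovery phase observed while checking $\nodeflag_{a_1}, \dots, \nodeflag_{a_\delta}$; the two cases are exhaustive.

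In the first case, some $\nodeflag_{a_j}$ with $1 \le j \le \delta$ was already spent when $x$ was inserted; by monotonicity it is spent now as well. The $\delta$-parent of $a_j$ is $a_{j+\delta}$ with $j+\delta \le 2\delta \le 3\delta$, so the whole subpath from $a_{j+\delta}$ to $a_j$ lies in $\pi$ and is uncorrupted, and Lemma~\ref{lem:preliminary} yields a black node in $\pi(a_{j+\delta}:a_j] \subseteq \{a_1, \dots, a_{2\delta-1}\} \subseteq \pi[z:x)$, contradicting that all these nodes are white. In the second case, all of $\nodeflag_{a_1}, \dots, \nodeflag_{a_\delta}$ were unspent, so the discovery phase does not return early and goes on to inspect the colors of $a_{\delta+1}, \dots, a_{2\delta}$ (the near-a-black test) and of $a_{2\delta+1}, \dots, a_{3\delta-1}$ (the black-free test). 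As all these nodes are uncorrupted and white, these reads are faithful, the near-a-black test fails and the black-free test succeeds, so $y = a_\delta$ is deemed black-free. Since $a_1, \dots, a_\delta$ are uncorrupted, the annotations $(x,i)$ written in the discovery phase survive untouched into the execution phase, which therefore runs to completion and colors $y = a_\delta$ black. This contradicts $a_\delta$ being white, closing the argument.

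I expect the main obstacle to be the careful bookkeeping that connects the current configuration to the one at $x$'s insertion: one must argue that the colors and flags read by Algorithm~\ref{alg:add-leaf} during that past call coincide with the (assumed) current values, which is precisely where the monotonicity of colors/flags on uncorrupted nodes and the invariance of uncorrupted memory are essential. A secondary but necessary check is the index arithmetic, ensuring that every node produced as ``black'' genuinely falls inside $\pi[z:x)$ rather than accidentally coinciding with $x$ or lying above $z$.
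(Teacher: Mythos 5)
Your proof is correct and follows essentially the same route as the paper's: assume no black node on $\pi[z:x)$, use Lemma~\ref{lem:preliminary} to rule out any spent flag among the lowest $\delta$ proper ancestors of $x$ at insertion time, and then conclude that the $\AddLeaf$ inserting $x$ finds $y=a_\delta$ black-free and colors it black, a contradiction. Your explicit monotonicity bookkeeping and index checks are just a more detailed writeup of the same argument.
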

\begin{proof}
Since no vertex in $\pi$ has been corrupted, the path $\pi$ must also belong to the noisy tree $\pseudoT$.
In the rest of the proof we assume that $\pi[z:x)$ contains no black nodes and show that this leads to a contradiction.

Let $y$ the $\delta$-parent of $x$ in $\pi$ and let $t_x$ be the time at which the $\AddLeaf(\cdot)$ operation that adds $x$ to $T$ is invoked. 
We know that, at time $t_x$, there exists no node $w$ in $\pi[y:x)$ such that $\nodeflag_w=\top$ since otherwise Lemma \ref{lem:preliminary} would immediately imply the existence of a black node in $\pi[z:w]$ contradicting the initial assumption. Then, the invocation of Algorithm~\ref{alg:add-leaf} that inserts $x$ also performs its execution phase.

Moreover, $y$ must be black-free at time $t_x$, and hence it is colored black during such a phase (refer to the pseudocode of Algorithm~\ref{alg:add-leaf}, and recall that a black-free node is not near-a-black). Since $y$ is not corrupted it must still be black, leading to a contradiction.
\end{proof}

Recall that we would like each uncorrupted path to contain 
a black vertex every $\delta$ levels. Consider an uncorrupted path $\pi$ of length between $\delta$ and $2\delta$ with a single black vertex $z$ on top. Then, the vertex at distance $\delta$ from $z$ is ``overdue'' to become black. Next lemma shows that all flags associated with descendants of the overdue vertex in $\pi$ must be unspent. In some sense, the data structure is preparing to recolor the missing black vertex. This will happen once $\delta$ unspent flags are available. See Figure~\ref{fig:main-merge}~(b).

\begin{figure}[t]
    \centering
    \includegraphics[width=\textwidth]{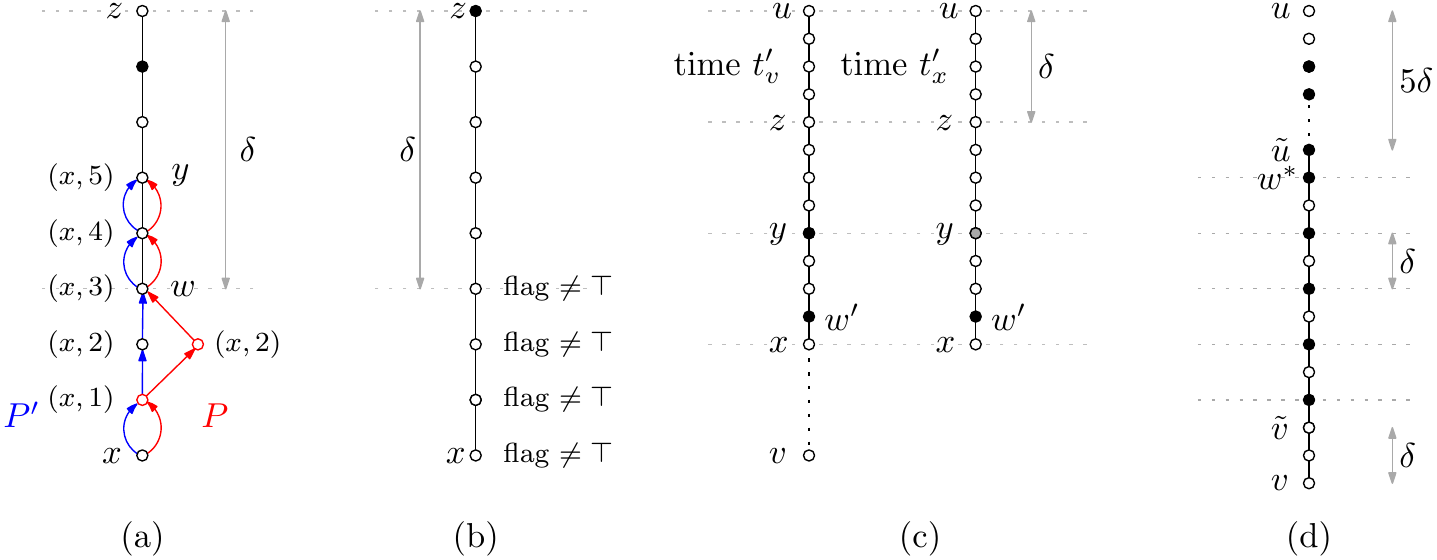}
    \caption{(a) Graphical representation of the proof of Lemma~\ref{lem:preliminary}, for $\delta=5$. (b), (c), (d): Representations of the statements of Lemma~\ref{lem:invariant3}, Lemma~\ref{lem:invariant4}, and Lemma~\ref{lem:blacknodes_pattern}, respectively.}
    \label{fig:main-merge}
\end{figure}

\begin{lemma}
\label{lem:invariant3}
Let $x$ and $z$ be two nodes in $T$ such that: $z$ is an ancestor of $x$ in $T$, no node in the path $\pi$ from $z$ to $x$ in $T$ has been corrupted, and $\delta \leq |\pi| < 2\delta$.
We have that, immediately after vertex $x$ is inserted, if the only black vertex in $\pi$ is $z$ then all the nodes $w$ in $\pi$ at distance at least $\delta$ from $z$ in $T$ are such that $\nodeflag_w \neq \top$.
\end{lemma}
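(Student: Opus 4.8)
The plan is to argue by contradiction, using Lemma~\ref{lem:preliminary} as the sole engine of the proof. Suppose that, immediately after $x$ is inserted, the only black vertex of $\pi$ is $z$, yet there exists a node $w$ on $\pi$ at distance $d := d_T(z,w) \ge \delta$ from $z$ with $\nodeflag_w = \top$. Since $|\pi| < 2\delta$ and $w \in \pi$, we have $\delta \le d \le 2\delta - 1$, so $w$ sits in the ``lower half'' of $\pi$.

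The key move is to locate the $\delta$-parent $z'$ of $w$ and invoke Lemma~\ref{lem:preliminary} on the pair $(z', w)$. Because $z$ is an ancestor of $w$ along $\pi$ at distance $d \ge \delta$, the vertex $z'$ is well defined and lies on $\pi$, at distance $d - \delta \in \{0, 1, \dots, \delta - 1\}$ below $z$ (in particular, $z' = z$ exactly when $d = \delta$). Consequently the path from $z'$ to $w$ is a subpath of $\pi$ and is therefore uncorrupted, so the hypotheses of Lemma~\ref{lem:preliminary} are all in force at the instant under consideration, and it delivers a black node $b$ in the subpath $\pi(z':w]$.

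The crucial observation is that this subpath excludes its top endpoint $z'$. Since $z'$ is either $z$ itself or a strict descendant of $z$, the vertex $z$ does not belong to $\pi(z':w]$, whence $b \ne z$. But $b$ is then a black vertex of $\pi$ distinct from $z$, contradicting the assumption that $z$ is the only black vertex of $\pi$. This contradiction shows that every $w$ on $\pi$ at distance at least $\delta$ from $z$ satisfies $\nodeflag_w \neq \top$, as required.

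I do not expect a substantial obstacle here, as the argument reduces to a single application of Lemma~\ref{lem:preliminary}. The one point requiring care is the boundary case $d = \delta$, where $z'$ coincides with $z$: it is precisely the exclusion of $z'$ from the subpath $\pi(z':w]$ in the conclusion of Lemma~\ref{lem:preliminary} that guarantees the produced black node lies \emph{strictly} below $z$, and hence genuinely contradicts the uniqueness of $z$. One should also confirm that Lemma~\ref{lem:preliminary} may legitimately be applied at the precise instant ``immediately after $x$ is inserted'', which holds because its hypotheses concern only the current memory state—uncorruptedness of the relevant subpath and $\nodeflag_w = \top$—both of which are assumed at that instant.
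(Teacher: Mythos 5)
Your proof is correct and follows essentially the same route as the paper: assume some $w$ at distance at least $\delta$ from $z$ has $\nodeflag_w=\top$, apply Lemma~\ref{lem:preliminary} to produce a black node strictly below $z$ on $\pi$, and contradict the uniqueness of $z$. If anything, you are slightly more careful than the paper's own proof in applying Lemma~\ref{lem:preliminary} to the exact $\delta$-parent $z'$ of $w$ (rather than to $z$ itself, which need not be the $\delta$-parent) and in noting that $\pi(z':w]$ excludes $z$.
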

\begin{proof}
Since no vertex in $\pi$ has been corrupted, the path $\pi$ must also belong to the noisy tree $\pseudoT$.
In what follows, we prove that, immediately after vertex $x$ is inserted, the existence of a node $w$ between $x$ and $z$ in $\pi$ such that $d_{\pseudoT}(w,z) \geq \delta$ and $\nodeflag_w=\top$ leads to a contradiction.
Indeed, since $\nodeflag_w = \top$, Lemma \ref{lem:preliminary} implies the existence of a black node in $\pi(z:w]$, and this contradicts the fact that $z$ is the only black node in $\pi[z:x]$.
\end{proof}

The next technical lemma is about the timing at which vertices of a long uncorrupted path become black. This will be instrumental to prove Lemma~\ref{lem:blacknodes_pattern}.
See Figure \ref{fig:main-merge}~(c).
\begin{lemma}
\label{lem:invariant4}
Let $u$ and $v$ be two nodes in $T$ such that $u$ is an ancestor of $v$, $d_T(u,v) \geq 3\delta$ and no node in the path $\pi$ from $v$ to $u$ in $T$ has been corrupted.
Let $y$ (resp. $x$) be the node in $\pi$ at distance $2\delta$ (resp. $3\delta$) from $u$ in $\pi$. 
Let $t'_v$ (resp. $t'_x$)  be the time immediately after the vertex $v$ (resp. $x$) is inserted. If the node $y$ is black at time $t'_v$, then there exists a node $w'$ in $\pi[y:x]$ that is black at time $t'_x$.
\end{lemma}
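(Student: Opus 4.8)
The plan is to argue by contradiction, assuming that no vertex of $\pi[y:x]$ is black at time $t'_x$. Throughout I would freely use two monotonicity facts that hold because $\pi$ is uncorrupted: a vertex of $\pi$ that is black (resp.\ whose flag is spent) at some moment stays so forever after, and, conversely, a black vertex always has a spent flag, since the execution phase of Algorithm~\ref{alg:add-leaf} spends $\nodeflag_b$ whenever it colors $b$ black. Since $y$ lies in $\pi[y:x]$, the working assumption in particular forces $y$ to be non-black at $t'_x$.

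First I would examine the insertion of $x$ itself. Because $x$ is at distance exactly $\delta$ below $y$ along the uncorrupted path, the discovery phase climbs $\delta$ levels from $x$ and reaches $y$, and every vertex that the execution phase could color black lies within distance $\delta$ above $x$, hence inside $\pi[y:x]$. Thus, were the execution phase to run, it would place a black vertex in $\pi[y:x]$ that is black already at $t'_x$, contradicting the assumption. Consequently the insertion of $x$ must abort during its discovery phase, which can only happen because (a)~one of the $\delta$ lowest proper ancestors of $x$ carries a spent flag, or (b)~$y$ is found to be neither near-a-black nor black-free. In case~(a), I would apply Lemma~\ref{lem:preliminary} to the ancestor carrying the spent flag: either the black vertex it produces already lies in $\pi[y:x]$ (an immediate contradiction) or it is a black ancestor $b'$ of $y$ at distance strictly less than $\delta$. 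In case~(b), the discovery phase has itself witnessed a black ancestor $b'$ of $y$ at distance in $\{\delta+1,\dots,2\delta-1\}$. In both cases $y$ owns, already at $t'_x$, a black ancestor $b'$ on $\pi$ at some distance $d'\in\{1,\dots,2\delta-1\}\setminus\{\delta\}$, and $y$ is not black at $t'_x$.

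The core of the argument is a \emph{blocking} sub-claim: if a vertex $c\in\pi$ has a black ancestor on $\pi$ at distance between $1$ and $\delta-1$, then $c$ can never be colored black afterwards. Such a $c$ is never black-free, so it could only be colored in the near-a-black branch of Algorithm~\ref{alg:add-leaf}; but the nearby black ancestor either sits among the $\delta$ lowest proper ancestors of the freshly inserted leaf (so its spent flag aborts that discovery phase) or strictly between the leaf's $\delta$-parent and $c$'s $\delta$-parent, contradicting the requirement that $c$'s $\delta$-parent be the \emph{closest} black proper ancestor. I would establish the sub-claim by a short case analysis on the distance of the black ancestor from the inserted leaf.

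Finally I would combine these pieces with the hypothesis that $y$ is black at $t'_v$: since $y$ is non-black at $t'_x$, it is colored black by some insertion occurring after $t'_x$. In case~(a) the ancestor $b'$ is at distance ${<}\,\delta$ from $y$, so the sub-claim with $c=y$ forbids $y$ from ever becoming black, a contradiction. In case~(b), $b'$ sits at distance $d'-\delta\in\{1,\dots,\delta-1\}$ above the $\delta$-parent $y'$ of $y$, so the sub-claim with $c=y'$ shows $y'$ is never colored black; but $y$ is not black-free (because of $b'$) and can be colored only in the near-a-black branch, which requires $y'$ to be black, again a contradiction. I expect case~(b) to be the main obstacle: when the nearest black ancestor of $y$ lies beyond distance $\delta$ one cannot block $y$ directly and must lift the obstruction one level up to $y'$, exploiting that coloring $y$ in the near-a-black branch forces $y'$ to be black. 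The other delicate point is the bookkeeping of time, namely transferring the ``black ancestor at $t'_x$'' information forward to the later insertion that colors $y$, which is exactly what the two monotonicity facts make sound.
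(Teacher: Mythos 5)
Your proof is correct, and it organizes the argument genuinely differently from the paper. The paper pivots on the time $\overline{t}_y$ just before $y$ is colored black, splitting into the scenario where $y$ was black-free and the one where its $\delta$-parent $z$ was the only black vertex in $\pi[z:y]$, and then crossing each scenario with the state at $t_x$ (all flags in $\pi[y:x)$ unspent or not; $z$ black or white); each of the resulting five subcases is closed either by showing that the insertion of $x$ would have colored $y$ black, or by using Lemma~\ref{lem:preliminary} to produce a black vertex in $\pi(z:y)$ that contradicts the scenario. You instead reason forward from the insertion of $x$: you classify why its execution phase cannot have run, extract in either case a black ancestor $b'$ of $y$ at distance in $\{1,\dots,2\delta-1\}\setminus\{\delta\}$, and then apply a ``blocking'' sub-claim --- a vertex of $\pi$ with a black ancestor on $\pi$ at distance in $\{1,\dots,\delta-1\}$ can never later be colored black --- to $y$ or to its $\delta$-parent. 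That sub-claim appears in the paper only implicitly: it is the contrapositive of the remark in subcase~2.1 that the $\delta-1$ lowest proper ancestors of $z$ cannot be black at $t_x$ if $z$ is to become black later, and of the observation in the proof of Lemma~\ref{lem:blacknodes_pattern} that the algorithm colors a node $b$ only when none of its $\delta-1$ lowest proper ancestors is black. Promoting it to a standalone lemma buys you a shorter case analysis and a reusable tool (it is essentially what the induction step of Lemma~\ref{lem:blacknodes_pattern} needs as well), whereas the paper only ever has to argue about the two specific times $t_x$ and $\overline{t}_y$ rather than about a permanent obstruction. When you write the sub-claim up, make explicit the two points it silently relies on: in the near-a-black branch the discovery and execution climbs must agree at the step visiting $c$, which is exactly what the $(x,i)$ annotations guarantee for uncorrupted vertices, and the $\delta-1$ vertices above $c$ must lie on the uncorrupted path, which holds for both of your choices $c=y$ and $c=z$ since $u$ sits $2\delta$, respectively $\delta$, levels above them.
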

\begin{proof}
Since no vertex in $\pi$ has been corrupted, the path $\pi$ must also belong to the noisy tree $\pseudoT$. In the rest of the proof we assume towards a contradiction that $y$ is black at time $t'_v$, yet there are no black nodes in $\pi[y:x]$ at time $t'_x$.

Let $z$ be the $\delta$-parent of $y$ in $\pi$.
Let $\overline{t}_y$ be the time immediately before $y$ is colored black. At time $\overline{t}_y$ there are only two possible scenarios: 
\begin{description}
    \item[Scenario 1:] At time $\overline{t}_y$, the node $y$ is black-free; 
    \item[Scenario 2:] At time $\overline{t}_y$, the node $z$ is the only black node in $T$ in $\pi[z:y]$.
\end{description}

We denote with $t_x$ the time immediately before vertex $x$ is inserted in $T$ and we consider the two scenarios separately (notice that $\overline{t}_y$ refers to a later time than $t_x$). We split scenario 1 into  two additional subcases:
\begin{description}
    \item[Subcase 1.1:] at time $t_x$ all the nodes $w$ in $\pi[y:x)$ are such that $\nodeflag_w\neq \top$;
    \item[Subcase 1.2:] at time $t_x$ there is a node $w$ in $\pi[y:x)$ such that $\nodeflag_w=\top$.
\end{description}

We start considering subcase 1.1. 
Since $\overline{t}_y$ follows $t_x$, and $y$ is black-free at time $\overline{t}_y$, vertex $y$ must also be black-free at time $t_x$.
Then, during the insertion of $x$, Algorithm~\ref{alg:add-leaf} colors $y$ black yielding a contradiction.

We now analyze subcase 1.2. Since $\nodeflag_w = \top$, Lemma \ref{lem:preliminary} implies the existence of a black node $b$ in $\pi[w,z)$ and, since we assume that there are no black nodes in $\pi[y:x]$, $b$ is in $\pi(z:y)$.
This shows that $y$ cannot be black-free at time $\overline{t}_y$ and contradicts the hypothesis of scenario 1.1.

We now consider Scenario 2, which we subdivide into three subcases:
\begin{description} 
\item[Subcase 2.1.] at time $t_x$ all the nodes $w$ in $\pi[y:x)$ are such that $\nodeflag_w\neq \top$ and $z$ is white;
\item[Subcase 2.2.] at time $t_x$ all the nodes $w$ in $\pi[y:x)$ are such that $\nodeflag_w\neq \top$ and $z$ is black;
\item[Subcase 2.3.] at time $t_x$ there is a node $w$ in $\pi[y:x)$ such that $\nodeflag_w=\top$.
\end{description}

We start by handling subcase 2.1. For the initial assumption, and for definition of this case, we have that there are no black nodes in $\pi[z:x]$ at time $t_x$. Since $z$ is colored black at some time $\overline{t}_z$ following $t_x$, we know that the $\delta-1$ nodes ancestor of $z$ are not black at time $t_x$, since this is incompatible with the fact that $z$ will become black. Since $\pi$ is not corrupted, we know that $y$ is black-free in $T$ at time $t_x$. This implies that $y$ is colored black during the insertion of $x$ in $T$, and hence $y$ is black at time $t'_x$ contradicting our hypotheses.

We proceed by analyzing subcase 2.2. 
At time $\overline{t}_y$ all nodes in $\pi[z:y]$, except for $z$, are white and hence the same is true at time $t_x$.
Since $z$ is black at time $t_x$ and $\nodeflag_w \neq \top$ for all nodes $w$ in $\pi[y:x)$, the \AddLeaf procedure adding $x$ will color $y$ black. Hence $y$ is black at time $t'_x$. This is a contradiction.

We now consider subcase 2.3. Together with Lemma \ref{lem:preliminary}, $\nodeflag_w = \top$ implies the existence of a black node $b$ in $\pi(z:w]$. Since we assume all the nodes in $\pi[y:x]$ to be white, the black node $b$ is in $\pi(z,y)$, contradicting the hypothesis of scenario 2.
\end{proof}

Now, we are ready to prove our main property about the pattern of black vertices discussed at the beginning of this section. See Figure \ref{fig:main-merge} (d).

\begin{lemma}
\label{lem:blacknodes_pattern}
Let $u$ and $v$ be two nodes in $T$ such that $u$ is an ancestor of $v$, the distance between $u$ and $v$ is at least $7\delta$, and no node in the path $\pi$ from $u$ to $v$ has been corrupted.
Let $\Tilde{u}$ be the node at distance $5\delta$ from $u$ in $\pi$ and let $\Tilde{v}$ be the node at distance $\delta$ from $v$ in $\pi$.
Then there is a black node $w^*$ in $\pi[\Tilde{u}:v]$ such that:
\begin{itemize}
    \item The distance between $w^*$ and $\Tilde{u}$ is at most $\delta$.
    \item A generic node in $\pi[w^* : \Tilde{v}]$ at distance $d$ from $w^*$  is black iff $d$ is a multiple of $\delta$.
    Moreover, if  $w$ is a black vertex in $\pi(w^*,\Tilde{v}]$ and $\overline{w}$ is the associated black vertex in $Q$, the parent of $\overline{w}$ in $Q$ corresponds to the $\delta$-parent of $w$ in $\pi$.
\end{itemize}
\end{lemma}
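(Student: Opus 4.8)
The plan is to parametrize $\pi$ by depth, writing $\pi_d$ for the node of $\pi$ at distance $d$ from $u$, so that $\tilde u = \pi_{5\delta}$ and $\tilde v = \pi_{d_T(u,v)-\delta}$. Since each $\pi_{d+1}$ is inserted as a child of $\pi_d$, the nodes of $\pi$ are inserted in order of increasing depth, and this ordering is what lets the timing statements bite. The single engine driving everything is a \emph{gap-filling} step read off Algorithm~\ref{alg:add-leaf}: if, at the moment $\pi_{a+2\delta}$ is inserted, we have that $g := \pi_a$ is black, no node of $\pi(\pi_a:\pi_{a+\delta})$ is black, and the flags of $\pi_{a+\delta},\dots,\pi_{a+2\delta-1}$ are all unspent, then this insertion runs its execution phase in the near-a-black case with $y=\pi_{a+\delta}$ and $\ell=\delta$, recolours $h := \pi_{a+\delta}$ black, and sets the parent of $\overline h$ in $Q$ to $\overline g$. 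Conversely, the near-a-black branch only ever creates a black vertex exactly $\delta$ below an already-black $Q$-parent, which is the mechanism that prevents black vertices from appearing at non-multiples of $\delta$ inside the stabilised region.

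Building on this, I would run a downward induction on the final configuration whose hypothesis at $g=\pi_a$ asserts that $g$ is black, that $g$ turned black before $\pi_{a+2\delta}$ was inserted, and that no node of $\pi(\pi_{a-\delta}:\pi_a)$ is black. Given the hypothesis, I first discharge the two side conditions of the gap-filling step. The absence of black nodes in $\pi(\pi_a:\pi_{a+\delta})$ follows from the near-a-black placement rule together with the regularity and timing already established above $g$: such a black node would either inherit a $Q$-parent strictly inside the forbidden segment $\pi(\pi_{a-\delta}:\pi_a)$, or would have had to be created black-free at a time when $g$ was not yet black, which Lemma~\ref{lem:invariant4} rules out. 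The unspent-flag condition follows from Lemma~\ref{lem:invariant3} applied with $z=g$ — certifying that the flags of $\pi_{a+\delta},\dots,\pi_{a+2\delta-1}$ are unspent immediately after $\pi_{a+2\delta-1}$ is inserted — together with the observation that the only event that could spend one of these flags afterwards is an execution phase colouring a node of the window black, which the induction has excluded. The gap-filling step then colours $h=\pi_{a+\delta}$ black with $Q$-parent $\overline g$; finally Lemma~\ref{lem:invariant4}, applied with reference node $\pi_{a-\delta}$ (so that its $2\delta$- and $3\delta$-descendants are $h$ and $\pi_{a+2\delta}$), certifies that $h$ is already black before $\pi_{a+3\delta}$ is inserted, which is exactly what re-establishes the hypothesis at $h$ and lets the induction continue. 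The recursion stops at $\tilde v$ because the step needs the $2\delta$-descendant $\pi_{a+2\delta}$ to exist on the uncorrupted path, which is no longer guaranteed once $a+2\delta > d_T(u,v)$.

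For the base case I would seed the induction from the buffer. Lemma~\ref{lem:invariant1} guarantees a black vertex in $\pi[u:\pi_{3\delta})$ and, more generally, that consecutive black vertices of $\pi$ are at most $3\delta$ apart. Feeding the topmost such black vertex into repeated applications of the gap-filling step collapses every super-$\delta$ gap lying in the uncorrupted region down to exactly $\delta$; the generous $5\delta$ of slack between $u$ and $\tilde u$ is precisely what provides the room (and the insertions) for this regularisation to complete before depth $5\delta$, so that the first black vertex $w^*$ produced at or below $\tilde u$ satisfies $d_T(\tilde u,w^*)\le\delta$ and meets the inductive hypothesis. Establishing this cleanly — i.e. pinning $w^*$ to the window $\pi[\tilde u:\pi_{6\delta}]$ rather than merely somewhere inside a $3\delta$-window — is where the value $5\delta$ (as opposed to a smaller multiple of $\delta$) is actually consumed.

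I expect the main obstacle to be the bookkeeping that couples the timing of recolourings with the spending of flags, rather than any single lemma application. The delicate points are: excluding that a flag on $\pi_{a+\delta},\dots,\pi_{a+2\delta-1}$ is spent prematurely by an execution phase launched from an \emph{off-path} descendant whose upward climb merges onto $\pi$ through a shared ancestor; ensuring the insertion of $\pi_{a+2\delta}$ is neither aborted in its discovery phase nor diverted into the black-free branch; and ruling out that $h$ is coloured black at an earlier time with an incorrect $Q$-parent. Each of these is handled by one of the preliminary lemmas — Lemma~\ref{lem:preliminary} turns spent flags into nearby black ancestors, Lemma~\ref{lem:invariant3} keeps regular windows flag-free, and Lemma~\ref{lem:invariant4} forces blackenings to happen early enough — but the real work is weaving them into a single invariant that survives every $\AddLeaf$ interleaving, and this is the crux of the argument.
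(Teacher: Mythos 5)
Your plan is structurally the same as the paper's proof: an induction down $\pi$ in steps of $\delta$, seeded by Lemma~\ref{lem:invariant1} (plus Lemma~\ref{lem:invariant4} for the seed's timing), driven by exactly the ``gap-filling'' mechanism you describe (the near-a-black branch with $\ell=\delta$ recolouring the $\delta$-descendant of the previous black node and attaching it to the right $Q$-parent), with Lemma~\ref{lem:invariant3} discharging the unspent-flag precondition and Lemma~\ref{lem:preliminary} converting spent flags into nearby black ancestors. So the approach is right; the issue is in one specific piece of bookkeeping that you yourself flag as the crux.

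The concrete gap is the timing clause of your induction hypothesis. You record only that $g=\pi_a$ ``turned black before $\pi_{a+2\delta}$ was inserted.'' That suffices for the gap-filling step at level $a$ to fire, but it is too weak to exclude a black-free colouring inside the open window $\pi(\pi_a:\pi_{a+\delta})$: a node $b$ there at distance $j$ from $\pi_a$ has only $\pi_a$ (distance $j\le\delta-1$) and $\pi_{a-\delta}$ (distance $j+\delta\le 2\delta-1$) as on-path ancestors inside its black-free window, and under your weaker timing both may still be white when $b$ is inserted and immediately recoloured via the black-free branch ($\pi_{a-2\delta}$ is already too far up to help). Moreover, the per-step appeal to Lemma~\ref{lem:invariant4} does not repair this: that lemma only certifies the existence of \emph{some} black node in $\pi[h:\pi_{a+2\delta}]$ at time $t'_{\pi_{a+2\delta}}$, and pinning that node to $h$ itself already presupposes the very window-emptiness you are trying to establish at the next level. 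The fix is to record the stronger timing that your own gap-filling step delivers for free --- $h=\pi_{a+\delta}$ is black already at the completion of the insertion of its \emph{$\delta$-descendant} $\pi_{a+2\delta}$, not merely before its $2\delta$-descendant appears --- and carry \emph{that} as the inductive timing invariant. This is what the paper does (its claim (i) asserts $w_{i}$ is black at time $t'_{i+1}$): with it, any candidate $b$ in the next open window is inserted, hence coloured, only after $\pi_{a-\delta}$ (within $2\delta-1$ of $b$) is already black, killing the black-free case; Lemma~\ref{lem:invariant4} is then needed only once, to give the seed $w_0$ its initial timing. With that strengthening your argument closes, including the $Q$-parent claim, since any colouring of $h$ between the insertions of $\pi_{a+\delta}$ and $\pi_{a+2\delta}$ must go through the near-a-black branch with $y'=g$.
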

\begin{proof} 
Since no vertex in $\pi$ has been corrupted, the path $\pi$ must also belong to the noisy tree $\pseudoT$.
Then, Lemma~\ref{lem:invariant1} ensures that, at any time following the insertion of $\Tilde{u}$ in $T$, there exists a black ancestor $y$ of $\Tilde{u}$ such that $d_\pi(y,\Tilde{u}) \le 3\delta$.
Such a vertex $y$ is the $\delta$-parent of some vertex $x$ in $\pi$. We denote by $u'$ the $2\delta$-parent of $y$ in $\pi$ and by $t'_x$ the time immediately after $x$ is inserted.
Since the length of $\pi[u':v]$ is at least $3\delta$ and $y$ must be black when $v$ is inserted, we can invoke Lemma~\ref{lem:invariant4} to conclude that there exists a node in $\pi[y:x]$ that is black at time $t'_x$. 
We choose $w_0$ as the closest ancestor of $x$ that is black at time $t'_x$.
Moreover, for $i = 1, \dots, \left\lfloor |\pi[w_0 : v]| / \delta \right\rfloor$
we let $w_i$ be the unique vertex at distance $\delta i$ from $w_0$ in $\pi[w_0, v]$.
Finally, let $t'_i$ be the time immediately after the insertion of $w_i$ into $T$.

We will prove by induction on $i \ge 1$ that (i) at time $t'_i$, all vertices $w_0, w_1, \dots, w_{i-1}$ are black; (ii)  from time $t'_i$ onward, all vertices in $\pi[w_0, w_i]$  that do not belong to  $\{ w_0, w_1, \dots, w_i \}$ are white.

We start by considering the base case $i=1$.
Regarding (i), we know that $w_0$ is black at time $t'_x$, and hence $w_0$ is also black at time $t'_1$ (which cannot precede $t'_x$). Regarding (ii), by our choice of $w_0$ we know that at time $t'_x$, the only black vertex in $\pi[w_0, x]$ is $w_0$.
Moreover, Algorithm~\ref{alg:add-leaf} can only color a node $b$ black if none of the $\delta-1$ lowest proper ancestors of $b$ is black. This implies that no vertex in $\pi(w_0, w_1)$ will be colored black. 

We now assume that the claim is true up to $i \ge 1$ and prove it for $i+1$.
We first argue that the following property holds: (*) at time $t'_{i+1}$ all vertices in $\pi(w_i:w_{i+1})$ are white.
Indeed, suppose towards a contradiction that there exists some black  vertex $b$ in $\pi(w_i:w_{i+1})$ at time $t'_{i+1}$.
When $b$ was colored black, either its $\delta$-parent $b'$ was black or $b$ was black-free. In the former case we immediately have a contradiction since $b'$ must be a vertex of $\pi(w_{i-1}, w_i)$ but all such vertices are white by the induction hypothesis. In the latter case $b$ must have been colored black after the insertion of $w_i$ but, by the induction hypothesis, we know that from time $t'_i$ onwards $w_{i-1}$ is black. This contradicts the hypothesis that $b$ was black-free.

Next, we prove (i). Suppose towards a contradiction that $w_i$ is white at time $t'_{i+1}$.
Then, using (*) and the induction hypothesis, we can invoke Lemma~\ref{lem:invariant3} on
the subpath of $\pi$ between $w_{i-1}$ and the parent of $w_{i+1}$
to conclude that all nodes $w$ in $\pi[w_i:w_{i+1})$ are such that $\nodeflag_w \neq \top$.
Hence, during the insertion of $w_{i+1}$, Algorithm~\ref{alg:add-leaf} reaches line~\ref{ln:superinsertion} and checks whether $w_i$ is near-a-black. Since this is indeed the case, a new black vertex is created in $\pi[w_i: w_{i+1})$, providing the sought contradiction.
Let $\overline{w}_i$ (resp. $\overline{w}_{i-1}$) be the vertex in $Q$ associated with $w_i$ (resp. $w_{i-1}$).
Notice that this argument also shows that, at time $t'_{i+1}$,  $\overline{w}_i$ is a child of $\overline{w}_{i-1}$ in $Q$ since $w_i$ becomes black after time $t'_i$ and not later than time $t'_{i+1}$, when $w_{i-1}$ was already black.

To prove (ii) it suffices to notice that, by inductive hypothesis, we only need to argue about the nodes in $\pi(w_i:w_{i+1})$.
From (*) we know that these nodes are white at time $t'_{i+1}$, while (i) ensures that $w_i$ is black at time $t'_{i+1}$. Then,  a similar argument to the one used in the base case shows that Algorithm~\ref{alg:add-leaf} will never color any node in $\pi(w_i:w_{i+1})$ black (as long as the nodes in $\pi$ remain uncorrupted). This concludes the proof by induction.

Let $w'$ be the node at distance $\delta$ from $\Tilde{u}$ in $\pi[\Tilde{u}:v]$.
Notice that $w_0$ belongs to $\pi[u:w']$. If $w_0$ lies in $\pi(\Tilde{u}:w']$, we can choose $w^* = w_0$. Otherwise,  $w_0$ is an ancestor of $\Tilde{u}$ and, from (i) and (ii), there is exactly one black vertex $b$ in $\pi(\Tilde{u}:w']$ and we choose $w^* = b$.
\end{proof}

\begin{algorithm}[t]
\caption{Answers a level ancestor query $\LA(v,k)$ in dynamic trees.}
\label{alg:dynamic-query}
\small
\If{$k \leq 7 \delta$}
{
	\Return $\texttt{climb}(v,k)$\;
}

\BlankLine

$\Tilde{v} \gets  \climb(v, \delta)$\;
Climb up the tree $\pseudoT$ from $\Tilde{v}$ for up to $ \delta$ nodes searching a black node\; 
\If{the previous procedure did not find a black node}
{
	\Return error;
}

\BlankLine

$v' \gets$ a black node found in the previous procedure\;
$d \gets$ distance between $v'$ and $v$\;
$k' \gets k -d-5\delta$\;
$u' \gets \LA_Q(q_{v'},\lfloor k'/\delta \rfloor)$\; 
$k_{\text{rest}} \gets k' -\lfloor k'/\delta\rfloor\cdot \delta+5\delta$\;
\Return $\texttt{climb}(u',k_{\text{rest}})$ \;
\end{algorithm}

The above lemma suggests a natural query algorithm. 
The query procedure is similar to the one for static case. When $k\le 7\delta$ we climb in $\pseudoT$  the nodes of the path from $v$ to the $k$-parent of $v$ in a trivial way. Otherwise, Lemma \ref{lem:blacknodes_pattern} ensures that if no vertex in the path $P$ from $v$ to its level ancestor in $T$ was corrupted by the adversary, then every other $\delta$-th vertex of $P$ is colored black except, possibly, for an initial subpath of length $\delta$ and for a trailing subpath of length $5 \delta$.
The query procedure explicitly ``climbs'' these portions of $P$ and queries $D_Q$ to quickly skip over its remaining ``middle'' part.
The pseudo-code is given in Algorithm~\ref{alg:dynamic-query}.

\noindent We are now ready to prove the main theorem of this section.

\begin{theorem}\label{thm:main_LA}
Our data structure requires linear space, supports the \AddLeaf operation in $O(\delta)$ worst-case time, and can answer resilient \LA queries in $O(\delta)$ worst-case time.
\end{theorem}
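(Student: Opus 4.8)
The plan is to establish the three quantitative bounds (space, \AddLeaf time, query time) and, crucially, the \emph{resilience} of the query, treating the time bounds as the warm-up and the space bound and the query correctness as the substantive parts.

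I would first dispose of the running times, since these follow by inspection. For \AddLeaf, Algorithm~\ref{alg:add-leaf} climbs $O(\delta)$ levels of $\pseudoT$ during the discovery phase (first $\delta$, then at most $2\delta-1$ more) and $\delta$ levels during the execution phase, performing $O(1)$ work per level; the single call to $\AddLeaf_Q$ or $\NewTree_Q$ costs $O(\delta)$ because $D_Q$ is the replicated version of a constant-time structure. Reads and writes of records in $\A$ are $O(1)$ worst-case by assumption, and keeping the record of $x$ in safe memory during the call avoids its corruption. For the query, Algorithm~\ref{alg:dynamic-query} either performs a single $\climb(v,k)$ with $k\le 7\delta$, or an $O(\delta)$ climb to $\tilde v$, a search over $\le\delta$ further nodes, one $\LA_Q$ query ($O(\delta)$ by replication), and a final $\climb(u',k_{\mathrm{rest}})$ with $k_{\mathrm{rest}}<6\delta$; every term is $O(\delta)$.

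For the space bound the only nontrivial point is that the number of black vertices (equivalently $|Q|$) is $O(n/\delta)$, after which $|\A|=O(n)$ and the replicated $D_Q$ occupies $O(\delta\cdot n/\delta)=O(n)$. I would prove the black-vertex bound by a potential argument. Every successful execution phase colors exactly one new vertex black and, in doing so, sets $\delta$ distinct flags to $\top$ in lines~\ref{ln:phase2-flag1}--\ref{ln:phase2-flag2}; moreover a flag is set to $\top$ only after it is observed to equal its annotation $(x,i)$, hence to be unspent, so each such event strictly increases the potential $\Phi$ defined as the number of flags currently equal to $\top$. Since $\Phi$ starts at $0$, is bounded above by $n$, and each of the $\le\delta$ adversarial corruptions can lower $\Phi$ by at most one (by resetting a single flag), the total number of \emph{spend} events over the whole execution is at most $n+\delta$; as each black vertex costs $\delta$ spends, the number of black vertices is at most $(n+\delta)/\delta=O(n/\delta)$.

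The heart of the proof is the resilience of the query: when no vertex of the path $P$ from $v$ to its $k$-parent $w$ is corrupted, Algorithm~\ref{alg:dynamic-query} returns $w$. For $k\le 7\delta$ this is immediate, since $\climb(v,k)$ follows only uncorrupted pointers (and the $p_u\ge u$ rule never triggers on an uncorrupted path). For $k>7\delta$ I would apply Lemma~\ref{lem:blacknodes_pattern} with its $u,v$ instantiated as $w,v$: it yields a topmost pattern vertex $w^*$ at distance $D^*\in[k-6\delta,k-5\delta)$ from $v$ below which, down to $\tilde v$, the black vertices occur exactly every $\delta$ levels and carry $Q$-parent pointers equal to their $\delta$-parents. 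One then checks that the search locates a genuine pattern vertex $v'$ at distance $d\in[\delta,2\delta]$ (it exists because $\tilde v$ lies in the regular region and black vertices are $\delta$-spaced), that $\LA_Q(q_{v'},\lfloor k'/\delta\rfloor)$ therefore follows $\delta$-parent steps and lands on a pattern vertex $u'$, and finally that the arithmetic $k_{\mathrm{rest}}=k-d-\lfloor k'/\delta\rfloor\delta$ makes $\climb(u',k_{\mathrm{rest}})$ terminate at distance exactly $k$, i.e.\ at $w$. The main obstacle, and the step I expect to require the most care, is precisely the interaction between the $5\delta$ trailing offset and the boundary of the regular region: one must verify that $d+\lfloor k'/\delta\rfloor\delta$ does not exceed $D^*$, so that every $\LA_Q$ step stays inside $\pi[w^*:\tilde v]$ where the $Q$-parent/$\delta$-parent identity holds, the delicate case being when $w^*$ sits a full $\delta$ levels below $\tilde u$; showing that the residual climb is both correct and of length $O(\delta)$ in every such case is exactly what the $\pm5\delta$ terms are tuned to achieve.
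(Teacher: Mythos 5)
Your proposal is correct and follows essentially the same route as the paper's own proof: running times by inspection of the two algorithms, the space bound via the observation that each black vertex costs $\delta$ flag-spends and a spent flag can only be un-spent by one of the adversary's $\delta$ corruptions (giving the same $(n+\delta)/\delta$ bound on $|Q|$), and query correctness reduced to Lemma~\ref{lem:blacknodes_pattern}. If anything, your treatment of the query arithmetic and of the boundary between the regular region and the trailing $5\delta$ offset is more explicit than the paper's, which simply asserts that correctness "immediately follows" from the lemma.
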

\begin{proof}
The correctness of the query immediately follows from Lemma \ref{lem:blacknodes_pattern}. Moreover, the time required to perform an \AddLeaf or an \LA operation is $O(\delta)$ since in both cases $O(\delta)$ vertices of $\pseudoT$ are visited and a single $O(\delta)$-time operation involving $D_Q$ is performed.

We now discuss the size of our data structure. Clearly, the space used to store the vector $\mathcal{A}$ of all records is $O(n)$. We only need to argue about the size of $D_Q$. Recall that $D_Q$ is the resilient version, obtained using the replication strategy, of the data structure that requires linear space, takes constant time to answer each $\LA$ query and to perform each $\AddLeaf$ operation \cite{AH00}. In order to show that $D_Q$ requires $O(n)$ space we will argue that the number of black vertices is $O(\frac{n}{\delta})$. As consequence we have that the size of $D_Q$ is $O(n)$.

To bound the number of vertices in $Q$, notice that in order to add a new vertex to $Q$ we need to spend $\delta$ flags that were previously unspent. Moreover,
a spent flag never becomes unspent unless the adversary corrupts the record of the corresponding node 
(by using one of its $\delta$ corruptions). As a consequence the nodes in $Q$ are at most $(n+\delta)/\delta=O(n/\delta)$.
\end{proof}

\section{Handling weighted \LA, \LCA, and \BVQ queries}\label{sec:handling_queries}

\subsection{Weighted \LA queries.}
In this section we show how to handle weighted $\LA$ queries when $\delta$ and the weights of the nodes are polylogarithmically-bounded positive integers.
Recall that, the answer to a weighted $\LA$ query $\LA(v, k)$ is the deepest ancestor $u$ of $v$ in $T$ such that the total weight of the vertices in the path from $u$ to $v$ in $T$ is \emph{at least} $k$.
The record of each node $v$ stores, along to the fields described in Section~\ref{sc:LA_dynamic}, an additional field containing the weight of $v$.
To store $Q$ we use a data structure $D_Q$ that maintains a forest of rooted trees in which every vertex has an associated weight. $D_Q$ is also able to answer weighted $\LA$  queries on $Q$ in $O(\delta)$ time.
For technical convenience we assume that a weighted level-ancestor query $\LA_Q(v, k)$ in $Q$ reports the vertex $u$ of minimum depth among the ancestors of $v$ such that the total weight of the vertices in the path between $u$ and $v$ (endpoints included) in $Q$ is \emph{at most} $k$.
This data structure is the resilient version of the one in \cite{AH00} which answers weighted $\LA$ queries in constant time when vertex weights are polylogarithmically-bounded. 

We modify Algorithm~\ref{alg:add-leaf} in two ways: (i) during the discovery phase, we keep track of the total weight $W$ of the vertices between $x$ (included) and the closest black proper ancestor $y'$ of $y$ ($y'$ excluded); (ii) during the execution phase, we keep track of the total weight $W'$ of the vertices between $x$ (included) and $x'$ (excluded).
Recall that when a vertex $x'$ becomes black in the execution phase of in Algorithm~\ref{alg:add-leaf} since $y$ was observed to be near-a-black in the discovery phase, the corresponding  vertex $\overline{x}'$ is added to $Q$ via the $\AddLeaf_Q$ operation on line~\ref{ln:add-leafQ-nearablack}. 
To handle weighted $\LA$ queries, we also need to assign a weight the the new vertex $\overline{x}'$.
Specifically, we choose this weight to be $W-W'$. Notice that, in the absence of corruptions, $W - W'$ is exactly the total weight of the vertices in path between $x'$ (included) and $y'$ (excluded).
Moreover, when a vertex $y$ becomes black black in the execution phase of in Algorithm~\ref{alg:add-leaf} because it was observed to be black-free in the discovery phase, we set the weight of the corresponding node $\overline{y}$ in $Q$ to the weight of $y$ in $T$.\footnote{Notice that, in absence of corruptions, the weight assigned to a vertex in $Q$ is always positive and polylogarithmically-bounded since so is $\delta$. The above property might not be true when the observed values are corrupted but we artificially enforce it by constraining  weights to be in this range. This also applies to weights read by the query algorithm explained in the following.}

We now describe how to answer a query $\LA(v,k)$. 
We start by optimistically assuming that the (unweighted) distance between $v$ and the sought vertex is short. 
We do so by climbing (up to) $10\delta$ levels from $v$ while keeping track of the total weight of the traversed vertices.
We stop at and return the first encountered vertex for which such a weight is at least $k$.

If the above procedure is unable to locate the sought vertex, we proceed as follows.
We climb $\delta$ levels from $v$, and we then search for a nearby black node $b$ among the closest $\delta$ proper ancestors of the reached vertex. 
During this process, we keep track of the total weight $W$ of the traversed nodes between $v$ (included) and $b$ (excluded).
Let $\overline{b}$ be the vertex in $Q$ that is associated with $b$.
We now perform an $\LA_Q(\overline{b}, k-W)$ query $D_Q$ to find the shallowest ancestor $\overline{b}'$ of $\overline{b}$ such that the overall weight $W'$ of the vertices in the path between $\overline{b}'$ and $\overline{b}$ in $Q$ is at most $k-W$. Let $b'$ be the node of $T$ that is associated to $\overline{b}'$.
Finally, we iteratively climb from $b'$ towards its ancestors until we reach a vertex $u$ such that the total weight of the path between $b'$ (excluded) to $u$ (included) is at least $k-W-W'$. We then return $u$. 
As we argue below, this final climbing procedure requires at most $6 \delta$ steps in the absence of corruptions. Therefore, if this threshold is exceeded we immediately stop the query and report an error.

We now discuss the correctness of the query. Let $u^*$ be the deepest ancestor of $v$ in $T$ such that the total weight between $v$ and $u^*$ is at least $k$, and assume that the path $\pi$ between $u^*$ and $v$ is uncorrupted.
To prove that the query procedure is correct it is sufficient to show that (i) the vertex $b'$ belongs $\pi$, and (ii) the (unweighted) length of $\pi[u^*:b']$ is at most $6 \delta$. 

To see (i), assume by contradiction that $b'$ in not in $\pi$, and let $\overline{b}_1$ the deepest ancestor of $\overline{b}$ in $Q$ such that the vertex in $T$ corresponding to the parent $\overline{b}_2$ of $\overline{b}_1$ in $Q$ does not belong to $\pi$. Let $b_2$ the vertex in $T$ associated to $\overline{b}_2$ (vertex $\overline{b}_2$ must exists since we assumed that $b'$ is not in $\pi$). As a consequence, since $\pi$ is uncorrupted, the total weight of the path in $Q$ between $\overline{b}_1$ (excluded) and $\overline{b}$ is equal to the total weight of $\pi(b_1,b]$. Moreover, 
the weight of $\overline{b}_1$ in $Q$ is at least the total weight of $\pi[u^*:b_1]$. This implies that $W'$ must be strictly greater than $k-W$ since $\overline{b}_2$ has weight at least $1$. This is a contradiction. 

It remains to prove (ii). Since the number of vertices of $\pi$ is at least $10 \delta$, we invoke Lemma~\ref{lem:blacknodes_pattern} to conclude that $b'$ must be at distance at most $6 \delta$ from $u^*$. Indeed, if this was not the case, then the $\delta$-parent of $b'$ would be black and would belong to $\pi$, which implies that $\overline{b}'$ could not be the vertex returned by the query on $D_Q$.

\subsection{\BVQ queries.}

The record of each node $v$ stores the weight of $v$ and maintains an additional field $\depth_v$ that intuitively keeps track of the depth of $v$ in $T$.
Initially, when $T$ is first built and consists only of the root $r$, we set $\depth_r = 0$.
Whenever a new node $v$ is appended as a child of $u$ via the $\AddLeaf$ operation, we set $\depth_v = \depth_u + 1$.

To store $Q$ we use a data structure $D_Q$ that that maintains a forest of rooted trees which can be updated by adding leaves in $O(\delta)$ time per operation. $D_Q$ is also able to answer (unweighted) $ \LA$ and $\BVQ$ queries on $Q$ in $O(\delta)$ time. This data structure is the resilient version of the one in \cite{AH00} which answers both $\LA$ and $\BVQ$ queries in constant time.

Moreover, we slightly modify the execution phase of Algorithm~\ref{alg:add-leaf} in the case in which $y$ was observed to be near-a-black in the discover phase.
In this scenario a vertex $x'$ becomes black, and a corresponding vertex $\overline{x}'$ is added to $Q$ via the $\AddLeaf_Q$ operation on line~\ref{ln:add-leafQ-nearablack}.
In our modification, we additionally climb the path from $x'$ (included) to $y'$ (excluded) while keeping track of the  vertex $w'$ of minimum weight among the encountered nodes.
We assign the weight of $w'$ to $\overline{x}'$ in $Q$ and we store a reference to $w'$ as (replicated) satellite data attached to $\overline{x}'$.
When instead the vertex that becomes black is $y$ since $y$ was observed to be black free during the discovery phase, we assign weight $+\infty$ to the corresponding black node $\overline{y}$ in $Q$ (no satellite data is needed in this case).

We now describe how to answer a $\BVQ(u,v)$ query.
In particular, we only need to consider the case in which $u$ is an ancestor of and $v$ since we can always perform a $\LCA(u,v)$ query (we will show how to handle $\LCA$ queries later in this section) to find the lowest common ancestor $z$ of $u$ and $v$ in $T$ and then return the minimum of the two bottleneck queries $\BVQ(z,u)$ and $\BVQ(z,v)$ which satisfy the above requirement.

Hence we assume that no corrupted vertex exists in the path $\pi$ from $u$ to $v$ in $T$ and we start by computing the quantity $d = \depth_u - \depth_v$.
Notice that, while $\depth_u$ (and $\depth_v$) might not contain the actual depth of $u$ (and $v$) in $T$, due to a corruption in some ancestor of $u$, the value of $d$ always matches the distance between $u$ and $v$ in $T$, i.e., the length of $\pi$.\footnote{The adversary could cause the values stored in $\depth_u$ or $\depth_v$ to overflow. However, with some additional technical care (by interpreting these fields  as unsigned integers in modular arithmetic) one can always recover $d$.}

If $d < 10\delta$, we answer the query using the trivial algorithm $\BVQ(u,v)$ that climbs the path $\pi$ one edge at a time from $v$ to $u$ and return the vertex of minimum weight encountered in the process. Clearly this algorithm is resilient and requires $O(\delta)$ time.
Otherwise, we use a strategy similar to the one used for $\LA$ queries in Algorithm~\ref{alg:dynamic-query}.
We climb $\delta$ levels from $v$, and we then search for a nearby black node $b$ among the $\delta$ ancestors of the reached vertex.
During this process, we keep track of the node $w_1$ of minimum weight among those we encounter.
Next, we perform an $\LA$ query on $D_Q$ to find the black node $b'$ that is $(\lfloor d / \delta \rfloor - 7)$-th parent of $b$ (Lemma~\ref{lem:blacknodes_pattern} ensures that this vertex exists and is black). Finally, we climb from $b'$ to $u$ in $O(\delta)$ time and keep track of a node $w_2$ having minimum weight among those encountered during the process.
The answer to $\BVQ(u,v)$ is the vertex of minimum weight between $w_1$, $w_2$, and the node returned by a bottleneck vertex query $\BVQ(b',b)$ in $Q$.

\subsection{\LCA queries.}
\newcommand{\cba}{\texttt{cba}\xspace}

The record of each node $v$ stores, along with the fields described in Section~\ref{sc:LA_dynamic}, a field $\depth_v$ managed as discussed for the $\BVQ$ query, and an additional field $\cba_v$ which intuitively stores a pointer to the vertex in $Q$ associated with the closest black ancestor of $v$ in $T$. 
When $v$ is inserted $\cba_v$ is \emph{unset}, and it will be possibly set during the execution phase of some later $\AddLeaf$ operation.
Similarly to $\nodeflag_v$, we allow a field $\cba_v$ that is unset to be annotated with a pair $(x,i)$, where $x$ is a vertex that is being inserted and $i$ is the observed distance between $x$ and $v$. 

To store $Q$ we use a data structure $D_Q$ that maintains a forest of rooted trees which can be updated by adding leaves in $O(\delta)$ time per operation. $D_Q$ is also able to answer $\LA$ and $\LCA$ queries on $Q$ in $O(\delta)$ time. This data structure can be obtained as combination of the resilient versions of the ones in \cite{AH00,CH05} which answers $\LA$ and $\LCA$ queries in constant time.

We modify both the discovery and the execution phases of Algorithm~\ref{alg:add-leaf}. Recall that in the discovery phase the algorithm locates the $\delta$-parent $y$ of $x$, and the closest proper black ancestor $y'$ of $y$, if any.
In our modification, when we traverse a generic ancestor $z$ of the inserted vertex $x$, we check $\cba_z$. If $\cba_z$ is unset, we annotate it with $(x, i)$ where $i$ is the observed distance between $x$ and $z$ (possibly overwriting previous annotation).
Moreover, we also store $q_{y'}$ in a variable $\overline{y}'$ in safe memory. 
In the execution phase, we only need to handle the case in which $y$ appeared to be near-a-black during the discovery phase. In this case, let $x'$ be the vertex such that $y'$ is the $\delta$-parent of $x'$ (see line~\ref{ln:execution-reachingx'}). 
In this case, we extend the for loop of line~\ref{ln:phase2-flag1} in order to reach $y'$. We still check and spend the encountered flags only for the fist $\delta$ vertices as before. In addition, for each vertex $z$ at distance $i$ from $x$, such that $z$ in the path between $x'$ (included) and $y'$ (excluded), we check that $\cba_z$ is either set to $\overline{y}'$ or unset and (correctly) annotated with $(x,i)$. In the latter case, we set $\cba_z$ to $\overline{y}'$. 
If neither of the previous conditions is met (i.e., $\cba_z$ is set to some vertex other than $\overline{y}'$ or it is unset and incorrectly annotated) we are in an \emph{exceptional situation} and $\cba_z$ is left unaltered.

Finally, we modify line~\ref{ln:add-leafQ-nearablack} in which $x'$ is colored black via the addition of a corresponding vertex $\overline{x}'$ to $Q$.
Our modification is as follows: If we are not in an exceptional situation, we proceed as before and we add $\overline{x}'$ as child of $\overline{y}'$ in $Q$. Otherwise, in the exceptional situation, we add $\overline{x}'$ as a new root in $Q$.

Before describing how to answer to a $\LCA$ query, we argue
that the above modifications guarantee stronger structural properties than the ones given in Section~\ref{sc:LA_dynamic}. In particular, we show that Lemma~\ref{lem:blacknodes_pattern} still holds.
First of all, notice that our modifications do not affect vertex colors. Hence, we only need to show that the parent-child relationships between black vertices in $Q$ are preserved.
Since the only way to alter these relationships is for an exceptional situation to happen during the execution of $\AddLeaf$ that colors some node $x'$ black, 
we only need to show that no exceptional situation can arise
when a (sufficiently deep) vertex of an uncorrupted path becomes black. This is proven in the following.

\begin{lemma}
Let $\pi$ be an ancestor-descendant path in $T$ of length at least $2\delta$, and let $x'$ be the deepest vertex of $\pi$.  
If $x'$ is black and no vertex in $\pi$ has been corrupted, then the execution of $\AddLeaf$ that colored $x'$ black did not encounter an exceptional situation.
\end{lemma}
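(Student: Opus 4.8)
The plan is to show that when $x'$ becomes black along an uncorrupted path $\pi$ of length at least $2\delta$, every vertex $z$ on the segment from $x'$ (included) to $y'$ (excluded) that the extended execution loop inspects will have $\cba_z$ in an acceptable state, namely either unset with the correct annotation $(x,i)$ or already set to $\overline{y}'$. Recall that an exceptional situation arises precisely when some such $z$ has $\cba_z$ set to a vertex of $Q$ different from $\overline{y}'$, or unset but incorrectly annotated. Since no vertex in $\pi$ is corrupted, the incorrect-annotation case can be dismissed immediately: the discovery phase of this very $\AddLeaf$ invocation annotated each traversed $\cba_z$ with the matching $(x,i)$, and because $z$ is uncorrupted this annotation survives to the execution phase. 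So the real content is ruling out the possibility that $\cba_z$ was already \emph{set} (not merely annotated) to some $\overline{w} \neq \overline{y}'$ before this insertion.

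The key observation I would use is a structural one about what it means for $\cba_z$ to be set. A field $\cba_z$ is set only during the execution phase of a \emph{prior} $\AddLeaf$ that colored some vertex black, and in that prior invocation $\cba_z$ was assigned the $Q$-vertex $\overline{y}'_{\text{old}}$ corresponding to the then-closest black proper ancestor of $z$. I would argue that if $\cba_z$ were set to some $\overline{w}$, then $w$ would be a black proper ancestor of $z$ lying within distance $\delta$ of $z$ along the (uncorrupted) path. Because $x'$ is becoming black and $z \in \pi[x':y')$ with $y'$ the $\delta$-parent of $x'$, the portion of $\pi$ strictly between $z$ and $y'$ contains no black vertex at the moment $x'$ is colored — this is exactly the near-a-black configuration that the discovery phase verified, together with the fact that $y'$ is the \emph{closest} black proper ancestor of $y$. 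Hence any black ancestor that $\cba_z$ could legitimately point to must be $y'$ itself, forcing $\overline{w} = \overline{y}'$ and contradicting the assumption of an exceptional situation.

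The main obstacle, and the step I expect to require the most care, is the \emph{temporal} bookkeeping: $\cba_z$ could have been set at a much earlier time, when the black structure above $z$ looked different, and I must show that no stale value can persist into a state inconsistent with the current near-a-black witness $y'$. To handle this I would combine the pattern lemma (Lemma~\ref{lem:blacknodes_pattern}) and the monotonicity of the coloring — once the uncorrupted segment stabilizes, the closest black proper ancestor of each $z \in \pi(x':y')$ is exactly $y'$ and remains so, since Algorithm~\ref{alg:add-leaf} never colors a vertex whose $\delta-1$ lowest proper ancestors are black, preventing any intervening black vertex from appearing between $z$ and $y'$. Thus whenever $\cba_z$ was set, it must have recorded $y'$ (equivalently $\overline{y}'$), because $y'$ was already the closest black proper ancestor at that earlier time as well. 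I would formalize this by invoking Lemma~\ref{lem:invariant1} and Lemma~\ref{lem:blacknodes_pattern} to pin down the unique black ancestor configuration on the relevant suffix of $\pi$, and then read off that the only consistent value for a set $\cba_z$ is $\overline{y}'$.

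A subtle point I would flag explicitly is that $\pi$ has length only at least $2\delta$, which is shorter than the $7\delta$ required by Lemma~\ref{lem:blacknodes_pattern}; so rather than applying the full pattern lemma I would rely on the more elementary Lemma~\ref{lem:preliminary} and the definition of near-a-black. Concretely, since $x'$ is near-a-black with witness $y'$ at distance $\delta$, and the segment $\pi(x':y')$ is black-free at the moment of coloring, any prior setting of $\cba_z$ for $z$ in this segment could only have referenced a black ancestor at or above $y'$; and the closest such is $y'$, because introducing a closer one would have required coloring a vertex in $\pi(z:y')$, which the near-a-black discovery explicitly ruled out. This reduces the whole argument to the invariant that $\cba$ fields, when set along an uncorrupted path, always point to the genuinely closest black proper ancestor, which is an immediate consequence of how the execution phase assigns them.
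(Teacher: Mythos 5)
Your setup is right and matches the paper's: you correctly dismiss the ``unset but incorrectly annotated'' case via the uncorrupted annotations from the discovery phase, and you correctly reduce the problem to ruling out a vertex $z\in\pi(y':x']$ with $\cba_z$ set to some $\overline{w}\neq\overline{y}'$, observing that such a $w$ must be a black proper ancestor of $z$ within distance $\delta$ of $z$, hence (since $\pi(y':x')$ is white when $x'$ is colored) at or strictly above $y'$. The gap is in how you close this last case. You assert that ``$y'$ was already the closest black proper ancestor at that earlier time as well,'' justified by the observation that no new black vertex can appear in $\pi(z:y')$. But that observation addresses the wrong side: the dangerous scenario is that $\cba_z$ was set at a time when $y'$ was still \emph{white} and the then-closest black proper ancestor was a vertex $w$ strictly \emph{above} $y'$. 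Nothing you say rules this out, and it is exactly the statement whose impossibility carries the whole lemma; asserting it is circular.

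The paper derives the contradiction quantitatively. Since $\cba_z$ can only be set to a black vertex at distance at most $\delta$ from $z$, and $z$ is a proper descendant of $y'$, a $w$ strictly above $y'$ satisfies $d(y',w)\le\delta-1$. Moreover $y'$ was white at the moment $\cba_z$ was set (the operation that set it had observed $w$ as the \emph{closest} black proper ancestor along that uncorrupted segment), so $w$ became black before $y'$ did. But Algorithm~\ref{alg:add-leaf} colors a vertex black only when its $\delta-1$ nearest proper ancestors are observed to be white, and $w$ is uncorrupted (it lies on $\pi$, since $d(x',w)\le 2\delta-1\le|\pi|$); hence $y'$ could never have been colored black --- contradiction. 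This distance-plus-timing argument, essentially an application of the same reasoning as in Lemma~\ref{lem:preliminary}'s aftermath, is the missing step; your appeal to Lemma~\ref{lem:blacknodes_pattern} and to the monotonicity of the coloring does not substitute for it (and, as you yourself note, $\pi$ is too short for that lemma to apply).
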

\begin{proof} 
Let $x$ be the node whose insertion in $T$ cause $x'$ to be colored black, and let $t_x$ the time immediately before $x$ is inserted in $T$. In the rest of the proof, we assume that the execution of $\AddLeaf$ inserting $x$ in $T$ is in an exceptional situation, and we prove that this leads to a contradiction. 

Since we are in an exceptional situation, at time $t_x$ the $\delta$-parent $y'$ of $x'$ must be black and all the other nodes in $\pi(y':x')$ must be white.  Let $\overline{y}' = q_{y'}$.
Then, the exceptional situation was caused by a node $w$ in $\pi(y':x']$ such that $\cba_w=\overline{z}$ and $\overline{z} \neq \overline{y}'$.
Let $z$ be the node in $T$ that is associated with vertex $\overline{z}$ and notice that  $\overline{z} \neq \overline{y}'$ implies $z \neq y'$.
Since no vertex in $\pi$ is corrupted, the existence of $w$ implies the existence of an ancestor $z$ of $w$ which is black at time $t_x$ and such that $d(z,w) \le \delta$. By hypothesis,  all the nodes in $\pi(y':x')$ are white at time $t_x$, and hence $z \neq y'$ must be a proper ancestor of $y'$. Node $z$ satisfies the following conditions: (i) $d(y',z) \leq \delta-1$ (since $d(w,z) \leq \delta$), and (ii) $y'$ was white when $\cba_w$ was set to $\overline{z}$ (since  $\cba_w=\overline{z}$ and no vertex in $\pi$ is corrupted).
This implies that, when $y'$ was colored black, there was a black node $z$ such that $d(y',z) \leq \delta-1$ and this is a contradiction.
\end{proof}

We now prove a structural property that will be exploited in the query procedure. 
If we need to answer a $\LCA(u,v)$ query and both $u$ and $v$ are black, we can relate the lowest common ancestor of $u$ and $v$ in $T$ with the  lowest common ancestor of the corresponding vertices  $\overline{u}$ and $\overline{v}$ (respectively) in $Q$.
More precisely, let us assume that the path $\pi$ between $u$ and $v$ in $T$ is uncorrupted, and let $w$ be the lowest common ancestor of $u$ and $v$. 
Let $\overline{a}$ (resp. $\overline{b}$) be the the shallowest ancestor of $u$ (resp $\overline{v}$) in $Q$ such that the corresponding vertex $a$ (resp. $b$) in $T$ belongs to $\pi[w:u]$ (resp. $\pi[w:v]$).

We prove the following lemma.
\begin{lemma}\label{lm:a-b-close}
The distance between $w$ and $a$ (resp. $b$) in $T$ is at most $6\delta$. 
Moreover, if both $\overline{a}$ and $\overline{b}$ have a parent in $Q$, then such parents coincide.  
\end{lemma}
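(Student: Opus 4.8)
The plan is to prove the two statements separately, in both cases leaning on the structural regularity of Lemma~\ref{lem:blacknodes_pattern} (which, as argued above, survives the \LCA-specific modifications) together with the fact that a black node created in the \emph{near-a-black} case of Algorithm~\ref{alg:add-leaf} always has its $Q$-parent set to its \emph{exact} $\delta$-parent in $T$.

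For the distance bound, I would first dispose of the trivial case $d_T(w,u)\le 6\delta$, where $a\in\pi[w:u]$ immediately gives $d_T(w,a)\le 6\delta$ (and symmetrically for $b$). When $d_T(w,u)\ge 7\delta$ I would apply Lemma~\ref{lem:blacknodes_pattern} to the uncorrupted ancestor--descendant path $\pi[w:u]$: it yields a black node $w^{\ast}$ at distance at most $6\delta$ from $w$ such that, below $w^{\ast}$, the black vertices are spaced exactly $\delta$ apart and their $Q$-parents coincide with their $\delta$-parents along $\pi$. Hence $\overline{w}^{\ast}$ is an ancestor of $\overline{u}$ in $Q$ whose associated vertex lies on $\pi[w:u]$. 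Since $\overline{a}$ is by definition the \emph{shallowest} such ancestor, $\overline{a}$ is an ancestor of $\overline{w}^{\ast}$ in $Q$, so $a$ is an ancestor of $w^{\ast}$ along $\pi[w:u]$ and $d_T(w,a)\le d_T(w,w^{\ast})\le 6\delta$; the same argument bounds $d_T(w,b)$.

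For the second statement, assume both $\overline{a}$ and $\overline{b}$ have a parent in $Q$. I would first observe that this forces $a$ (resp.\ $b$) to lie within $\delta$ of $w$: the $Q$-parent of $\overline{a}$ corresponds to the $\delta$-parent $a^{+}$ of $a$ in $T$, and minimality of $\overline{a}$ requires $a^{+}\notin\pi[w:u]$, which is possible only if $a$ is a proper descendant of $w$ at distance strictly less than $\delta$ (if $w$ itself is black then $\overline{a}=\overline{b}=\overline{w}$ and the claim is immediate). Consequently $a^{+}$ and $b^{+}$ are both \emph{black proper ancestors of $w$}, lying at distances $\delta-d_T(w,a)$ and $\delta-d_T(w,b)$ above $w$, respectively. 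Since the ancestors of $w$ form a single chain, it suffices to prove that $a^{+}$ and $b^{+}$ are the \emph{same} node, i.e.\ that the black pattern has the same ``phase'' on the two branches; concretely, I would identify both $a^{+}$ and $b^{+}$ with the closest black proper ancestor of $w$.

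The main obstacle is exactly this phase-alignment step. The nodes $a$, $b$, $a^{+}$, $b^{+}$ all live in the irregular window of width $\delta$ just around $w$, where Lemma~\ref{lem:blacknodes_pattern} gives no control, and $a^{+}$, $b^{+}$ are proper ancestors of $w$ that are \emph{not} covered by our uncorruptedness assumption on $\pi$. To overcome this I would exploit the near-a-black invariant, which guarantees that when $a$ was coloured black its recorded $Q$-parent $a^{+}$ was the \emph{closest} black proper ancestor encountered, together with the preceding lemma on exceptional situations, which shows that on sufficiently long uncorrupted paths the $\cba$ annotations are consistent and no spurious reattachment occurs. The idea is to combine the two uncorrupted branches $\pi[w:u]$ and $\pi[w:v]$ into a single uncorrupted path through $w$: this provides, on either side of $w$, enough uncorrupted length to rule out a black vertex strictly between $w$ and $a^{+}$ (equivalently between $w$ and $b^{+}$), forcing $a^{+}=b^{+}$ to be the unique closest black proper ancestor of $w$ and hence $\overline{a}$ and $\overline{b}$ to share the same parent $\overline{a^{+}}=\overline{b^{+}}$ in $Q$.
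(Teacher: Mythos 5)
Your treatment of the distance bound is essentially the paper's (which simply invokes Lemma~\ref{lem:blacknodes_pattern}), but the second claim has a genuine gap, and it is exactly the step you yourself flag as ``the main obstacle''. Your plan is to identify $a^{+}$ and $b^{+}$ as one and the same physical vertex of $T$, namely the closest black proper ancestor of $w$. But $a^{+}$ and $b^{+}$ are \emph{proper ancestors of $w$} and therefore lie outside $\pi$: the hypothesis gives you no control whatsoever over that region. Concatenating the two branches into a single uncorrupted path through $w$ does not help, because the resulting path still contains no proper ancestor of $w$. Above $w$ the adversary may corrupt parent pointers, colors, and $q$-pointers at will, and may even change them between the two insertion times, so the two upward walks (the one performed when $a$ is colored and the one performed when $b$ is colored) need not follow the same edges above $w$; the ``observed $\delta$-parent'' of $a$ need not be the true $\delta$-parent $a^{+}$, and the ``closest black proper ancestor of $w$'' is not a well-defined, time-invariant object. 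Even if you could prove $a^{+}=b^{+}$ as vertices of $T$, you would additionally need the pointer $q_{a^{+}}$ to evaluate to the same vertex of $Q$ at both insertion times, which again fails if $a^{+}$ is corrupted.

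The paper's proof avoids reasoning about anything above $w$: it uses the field $\cba_w$, stored at $w$ itself (which is on $\pi$ and hence uncorrupted), as a synchronization token between the two insertions. Assume w.l.o.g.\ that $\overline{a}$ is inserted first, with parent $\overline{y}'$. By minimality of $\overline{a}$, the vertex $y'$ lies outside $\pi[w:u]$, so the execution-phase walk from $a$ towards $y'$ passes through $w$ and, since the operation is not exceptional (otherwise $\overline{a}$ would be a root), leaves $\cba_w=\overline{y}'$. When $b$ is later colored black with observed $\delta$-parent $z$, that walk also passes through the uncorrupted $w$, and to avoid the exceptional situation (which it must, since $\overline{b}$ has a parent) it must find $\cba_w=q_z$, forcing $q_z=\overline{y}'$. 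The identities of $y'$ and $z$ as vertices of $T$ are irrelevant; only the equality of the stored $Q$-pointers matters. This is precisely the purpose of the $\cba$ machinery, which your proposal invokes only to exclude exceptional situations rather than as the mechanism that forces the two parents to coincide.
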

\begin{proof}
The bound on the distance between $w$ and $a$ (resp. $b$) in $T$ immediately follows from Lemma~\ref{lem:blacknodes_pattern}, hence in the rest of the proof we focus on showing that that the parents of $\overline{a}$ and $\overline{b}$ (if they exist) must coincide.

Assume, w.l.o.g., that $\overline{a}$ was inserted in $Q$ before $\overline{b}$, and let $\overline{y}'$ be the parent of $\overline{a}$ in $Q$. 
Notice that, when $a$ was colored black, the corresponding $\AddLeaf$ operation inserted vertex $\overline{a}$ as a child of $\overline{y}'$.
Hence, the black vertex $y'$ in $T$ corresponding to $\overline{y}'$ was observed to be an ancestor of both $a$ and $w$ (by the choice of $\overline{a}$) in the discovery phase. As a consequence, the execution phase ensured that the value of $\cba_w$ is exactly $\overline{y}'$ (as otherwise the $\AddLeaf$ operation would have encountered an exceptional situation and $\overline{a}$ would have been a root of a tree in $Q$). 
Analogously, the $\AddLeaf$ operation that colored $b$ black was not in an exceptional situation and, by the definition of $\overline{b}$, we know that $w$ lies in the (observed) path between $b$ and its (observed) $\delta$-parent $z$.
Since $w$ is uncorrupted, $\AddLeaf$ operation successfully checked that $\cba_w$ matched $q_{z}$, we conclude that $q_z=\overline{y}'$. Hence, the parent of $\overline{b} = \overline{y}'$.
\end{proof}

We are now ready to describe how to answer to a $\LCA(u,v)$ query. 
We first describe a simple \emph{naive query} that correctly handles the case in which at least one of $u$ and $v$ is close to their lowest common ancestor $w$.
If this is not the case, this query will be \emph{inconclusive}.

Let $k$ be the difference between the depth of $u$ and the depth of $v$.\footnote{Similarly to the case of $\BVQ$ queries, this value can be recovered from $\depth_v$ and $\depth_u$.}
We describe the case $k\ge 0$ (the case $k<0$ is symmetric).
We perform an $\LA(v, k)$ query to find the $k$-parent $v'$ of $v$. Notice that, in absence of corruptions, the distance between $w$ and $u$ is the same as the distance between $w$ and $v'$.
We now iteratively perform the following steps.
We check whether $v'=u$, if this is the case we answer the query by reporting $v'$ as the sought lowest common ancestor. Otherwise, we move $u$ and $v'$ to their respective parents and repeat.
If the parent of $u$ or $v$ does not exist or we are unable to answer the query within $10\delta$ iterations, we stop the above procedure and say that the naive query is inconclusive. 

We now need to handle the case in which the naive query is inconclusive.
For simplicity we first describe our strategy assuming that both $u$ and $v$ are black, and then we show how to handle a generic query. 

Let $\overline{u}$ and $\overline{v}$ be the vertices in $Q$ corresponding to $u$ and $v$, respectively. We perform an $\LCA$ query in $Q$ to find the lowest common ancestor $\overline{y}'$ of $\overline{u}$ and $\overline{v}$, if any. We assume also that, if such a vertex exists, this query is able to return the two vertices $\overline{a}$ and $\overline{b}$ of $Q$ such that $\overline{y}'$ is the parent of both $\overline{u}$ and $\overline{v}$.\footnote{It is easy to support such a query with a constant number of $\LCA$ and $\LA$ queries on $Q$.} If the $\overline{y}'$ exists, we return the outcome of the naive $\LCA$ query on $a$ and $b$, where $a$ (resp. $b$) is the black vertex in $T$ corresponding to $\overline{a}$ (resp. $\overline{b}$).
Lemma~\ref{lm:a-b-close} ensures that the vertices $a$ and $b$ are close descendants of $w$, and hence the naive query correctly finds $w$.

It remains to handle the case in which $\overline{y}'$ does not exist, i.e., $\overline{u}$ and $\overline{v}$ belong to different trees of $Q$.
In this case, we let $\overline{a}'$ (resp. $\overline{b}'$) be root of the tree in $Q$ that contains $\overline{u}$ (resp. $\overline{v}$). From Lemma~\ref{lm:a-b-close}, it must be that $\overline{a}'=\overline{a}$ or $\overline{b}'=\overline{b}$ (possibly both). 
In this case, we inspect the fields $\depth_{u}$, $\depth_{v}$, $\depth_{a'}$ and $\depth_{b'}$ and we consider the vertex among $a'$ and $b'$ that appears to be deeper. W.l.o.g., let $a'$ be such vertex and let $k_{a'}$ be the observed difference in levels between $u$ and $a'$. 
We check that $k_{a'}$ is non-negative and that $\LA(u, k_{a'}) = a'$. If the above condition is met, we perform a naive $\LCA$ query between $a'$ and $v$. If this query answers with some vertex $w'$, it must be that $w'=w$ and hence we return it. 
Otherwise, if $k_{a'} <0$, the answer $\LA(u, k_{a'})$ was not $a'$, or the naive query was inconclusive, we must have $b' = b$ and we return the result of the naive $\LCA$ query between $b'$ and $v$.

It remains to handle the case in which at least one among $u$ and $v$ is white. Since we already performed a naive query between $u$ and $v$, we know that both distances between $u$ and $w$, and $v$ and $w$ are more than $10 \delta$. Hence, we climb from $u$ (resp. $v$) until we reach the first black ancestor $u'$ (resp. $v'$) of $u$ (resp. $v$).
By Lemma~\ref{lem:blacknodes_pattern}, we must find such a node $u'$ (resp. $v'$)  at distance at most $2\delta$ from $u$ (resp. $v$).
We can now return the vertex reported by a $\LCA(u', v')$ query, which can be answered as described above since both $u'$ and $v'$ are black.

\clearpage

\bibliographystyle{plainurl}
\bibliography{njl}
\end{document}